\documentclass[11pt]{article}

\usepackage{amssymb}
\usepackage{amsmath}
\usepackage{amsthm}
\usepackage{latexsym}
\usepackage{graphicx}
\usepackage{color}
\usepackage{graphics}
\usepackage[round]{natbib}
\usepackage{cancel}
\usepackage{mathtools}
\usepackage[mathscr]{euscript}
\usepackage{hyperref}
\usepackage{tikz}
\usetikzlibrary{decorations.markings}

\usepackage{fullpage}

\usepackage{MnSymbol}
\DeclareMathAlphabet\mathbb{U}{msb}{m}{n}
\usepackage{xpatch}

\DeclareMathOperator*{\E}{\mathbb E}

\DeclarePairedDelimiter{\abs}{\lvert}{\rvert}

\newtheorem{theorem}{Theorem}
\newtheorem{definition}{Definition}
\newtheorem{lemma}{Lemma}

\newcommand{\cF}{\mathcal{F}}

\newcommand{\cL}{\mathcal{L}}

\newcommand{\cP}{\mathcal{P}}

\newcommand{\cT}{\mathcal{T}}
\newcommand{\cV}{\mathcal{V}}

\newcommand{\bbR}{{\mathbb R}}

\newcommand{\ignore}[1]{}

\newcommand{\Circ}{\mathsf{Circ}}
\newcommand{\CircBar}{\overline{\mathsf{Circ}}}
\newcommand{\Arb}{\mathsf{Arb}}
\newcommand{\SArb}{\mathsf{SArb}}
\newcommand{\Flip}{\mathsf{Flip}}

\usepackage[algo2e,ruled]{algorithm2e}
\usepackage{algorithm}
\usepackage[noend]{algpseudocode}

\makeatletter
\newtheorem*{rep@theorem}{\rep@title}
\newcommand{\newreptheorem}[2]{%
\newenvironment{rep#1}[1]{%
 \def\rep@title{#2 \ref{##1}}%
 \begin{rep@theorem}}%
 {\end{rep@theorem}}}
\makeatother

\hypersetup{
  colorlinks   = true,
  urlcolor     = blue,
  linkcolor    = blue,
  citecolor    = blue
}

\newreptheorem{theorem}{Theorem}
\newreptheorem{lemma}{Lemma}
\newreptheorem{corollary}{Corollary}
\newreptheorem{proposition}{Proposition}

\title{Bernoulli Factories for Flow-Based Polytopes}
\author{Rad Niazadeh \\ Chicago Booth School of Business \and Renato Paes Leme \\ Google Research \and Jon Schneider \\ Google Research}
\date{}

\begin{document}

\maketitle

\begin{abstract}
We construct explicit combinatorial Bernoulli factories for the class of \emph{flow-based polytopes}; integral 0/1-polytopes defined by a set of network flow constraints. This generalizes the results of Niazadeh et al. (who constructed an explicit factory for the specific case of bipartite perfect matchings) and provides novel exact sampling procedures for sampling paths, circulations, and $k$-flows. In the process, we uncover new connections to algebraic combinatorics.
\end{abstract}

\section{Introduction}

A Bernoulli factory is a technique from applied probability that allows one to exactly sample certain random variables having only sample access to its parameters. Its original goal (\cite{asmussen1992stationarity,keane1994bernoulli}) was to perform exact simulation of certain stochastic processes, but it has since found applications in various fields such as mechanism design (\cite{dughmi2017bernoulli,cai2019efficient}), Bayesian inference (\cite{gonccalves2017exact,herbei2014estimating})  and quantum physics (\cite{dale2015provable,yuan2016experimental}).

Earlier work (\cite{von195113,keane1994bernoulli,nacu2005fast,mossel2005new}) on Bernoulli factories focused on sampling coins from other coins (e.g., given a coin with probability $p$, how can you construct a coin with probability $f(p)$?). Since then, there has been a lot of interest in exactly sampling random combinatorial objects such as: $k$-subsets (\cite{sampford1967sampling}), dice (\cite{morina2019bernoulli}), perfect matchings, and vertices of certain polytopes (\cite{niazadeh2021combinatorial}). In this paper, we provide factories for paths, flows and circulations uncovering new connections to algebraic combinatorics.

\paragraph{Sampling a path}
As an example of a combinatorial Bernoulli factory, consider the problem of sampling a path. We have a DAG with two special nodes which we call source and sink. On each edge $e$ of the DAG, sits a $p_e$-coin. A $p$-coin is a Bernoulli random variable with bias $p$, i.e., it outputs $1$ with probability $p$ and $0$ with probability $1-p$. Importantly, we \emph{do not know} the probabilities $p_e$ but we are promised that for every node other than the source and the sink, the sum of incoming probabilities equals the sum of outgoing probabilities. For the source, there are no incoming edges and the sum of outgoing probabilities is $1$. For the sink, there are no outgoing edges and the sum of incoming probabilities is $1$. We can flip the coins to obtain as many i.i.d. samples as we want. The task is to sample a path from the source to the sink such that each edge $e$ is included in the path with probability $p_e$ (exactly).\\

\begin{center}
\begin{tikzpicture}
\usetikzlibrary{arrows.meta}
\usetikzlibrary{calc}

\tikzset{vertex/.style = {shape=circle,draw,minimum size=1.5em}}
\tikzset{edge/.style = {-{Latex[length=3mm, width=2mm]}}}

\node[vertex] (s) at  (0,0) {$s$};
\node[vertex] (a) at  (2,1) {};
\node[vertex] (b) at  (4,-1) {};
\node[vertex] (c) at  (6,1) {};
\node[vertex] (d) at (8,-1) {};
\node[vertex] (t) at (10,0) {$t$};
\draw[edge] (s) to (a);
\draw[edge] (s) to (b);
\draw[edge] (a) to (b);
\draw[edge] (b) to (c);
\draw[edge] (a) to (c);
\draw[edge] (c) to (t);
\draw[edge] (b) to (d);
\draw[edge] (c) to (d);
\draw[edge] (d) to (t);

\begin{scope}[shift={(1,.5)},scale=2]
\filldraw[color=yellow!50!black, fill=yellow!50!white, thick] (-0.15,0)--(-.15,-.05) to[out=-25,in=-155] (0.15,-0.05)--(.15,0)--cycle;
\filldraw[color=yellow!50!black, fill=yellow!50!white, thick] (0,0) ellipse (0.15 and 0.05);
\node at (0,.2) {$p_1$};
\end{scope}

\begin{scope}[shift={(2,-.5)},scale=2]
\filldraw[color=yellow!50!black, fill=yellow!50!white, thick] (-0.15,0)--(-.15,-.05) to[out=-25,in=-155] (0.15,-0.05)--(.15,0)--cycle;
\filldraw[color=yellow!50!black, fill=yellow!50!white, thick] (0,0) ellipse (0.15 and 0.05);
\node at (0,.2) {$p_2$};
\end{scope}

\begin{scope}[shift={(5,0)},scale=2]
\filldraw[color=yellow!50!black, fill=yellow!50!white, thick] (-0.15,0)--(-.15,-.05) to[out=-25,in=-155] (0.15,-0.05)--(.15,0)--cycle;
\filldraw[color=yellow!50!black, fill=yellow!50!white, thick] (0,0) ellipse (0.15 and 0.05);
\node at (0,.2) {$p_4$};
\end{scope}

\begin{scope}[shift={(3,0)},scale=2]
\filldraw[color=yellow!50!black, fill=yellow!50!white, thick] (-0.15,0)--(-.15,-.05) to[out=-25,in=-155] (0.15,-0.05)--(.15,0)--cycle;
\filldraw[color=yellow!50!black, fill=yellow!50!white, thick] (0,0) ellipse (0.15 and 0.05);
\node at (0,.2) {$p_3$};
\end{scope}

\begin{scope}[shift={(4,1)},scale=2]
\filldraw[color=yellow!50!black, fill=yellow!50!white, thick] (-0.15,0)--(-.15,-.05) to[out=-25,in=-155] (0.15,-0.05)--(.15,0)--cycle;
\filldraw[color=yellow!50!black, fill=yellow!50!white, thick] (0,0) ellipse (0.15 and 0.05);
\node at (0,.2) {$p_5$};
\end{scope}

\begin{scope}[shift={(6,-1)},scale=2]
\filldraw[color=yellow!50!black, fill=yellow!50!white, thick] (-0.15,0)--(-.15,-.05) to[out=-25,in=-155] (0.15,-0.05)--(.15,0)--cycle;
\filldraw[color=yellow!50!black, fill=yellow!50!white, thick] (0,0) ellipse (0.15 and 0.05);
\node at (0,.2) {$p_6$};
\end{scope}

\begin{scope}[shift={(7,0)},scale=2]
\filldraw[color=yellow!50!black, fill=yellow!50!white, thick] (-0.15,0)--(-.15,-.05) to[out=-25,in=-155] (0.15,-0.05)--(.15,0)--cycle;
\filldraw[color=yellow!50!black, fill=yellow!50!white, thick] (0,0) ellipse (0.15 and 0.05);
\node at (0,.2) {$p_7$};
\end{scope}

\begin{scope}[shift={(9,-0.5)},scale=2]
\filldraw[color=yellow!50!black, fill=yellow!50!white, thick] (-0.15,0)--(-.15,-.05) to[out=-25,in=-155] (0.15,-0.05)--(.15,0)--cycle;
\filldraw[color=yellow!50!black, fill=yellow!50!white, thick] (0,0) ellipse (0.15 and 0.05);
\node at (0,.2) {$p_8$};
\end{scope}

\begin{scope}[shift={(8,0.5)},scale=2]
\filldraw[color=yellow!50!black, fill=yellow!50!white, thick] (-0.15,0)--(-.15,-.05) to[out=-25,in=-155] (0.15,-0.05)--(.15,0)--cycle;
\filldraw[color=yellow!50!black, fill=yellow!50!white, thick] (0,0) ellipse (0.15 and 0.05);
\node at (0,.2) {$p_9$};
\end{scope}

\end{tikzpicture}
\end{center}

\noindent \emph{Solution:} Assume that the nodes are topologically sorted, i.e., the source is node $1$, the sink is node $n$ and every edge goes from $i$ to $j$ with $i < j$. For each node $u$, let $\textsf{In}(u)$ and $\textsf{Out}(u)$ be the incoming and outgoing edges of $u$. A path can be sampled as follows: start from node $u=1$, sample an outgoing edge $e = (u,v) \in \textsf{Out}(u)$ with probability proportional to $p_e$, add the edge to the path, set $u=v$ and repeat until $u$ is the sink. Sampling an outgoing edge proportionally to $p_e$ can be done only using sample access to the coins as follows: choose $e$ uniformly from  $\textsf{Out}(u)$ (using external randomness) and then flip the $p_e$-coin. If it returns $1$, choose this edge; otherwise, retry.

We now need to argue that each edge $e = (u,v)$ is chosen with probability $p_e$.  This probability is the probability that the process reaches node $u$ times the probability that the edge $e$ is sampled out of $u$. The latter probability is $p_e / \sum_{e' \in \textsf{Out}(u)} p_{e'}$. Let $\phi(u)$ be the probability that $u$ is reached. We are left to prove that $\phi(u) = \sum_{e' \in \textsf{Out}(u)} p_{e'}$. This follows by a simple induction. Observe that $\phi(s) = 1$. Now, assuming the induction hypothesis for every $v < u$:
$$\phi(u) = \sum_{(v,u) \in \textsf{In}(u)} \phi(v) \cdot \frac{p_{(v,u)}}{\sum_{(v,w) \in \textsf{Out}(v)} p_{(v,w)}} = \sum_{(v,u) \in \textsf{In}(u)} p_{(v,u)} = \sum_{e \in \textsf{Out}(u)} p_e $$\\

\paragraph{Sampling flows and circulations}
Sampling a path is relatively straightforward. Consider the more difficult version of the above problem where the sum of probabilities outgoing from the source and incoming to the sink is $2$, and we are asked to sample an integral $2$-flow (i.e., a pair of edge-disjoint paths from source to sink) such that each edge is in the flow with probability $p_e$. In another version, the sum of incoming probabilities equals the sum of outgoing probabilities for all nodes and we are asked to sample an integral circulation such that each edge $e$ is sampled with probability $p_e$. Solving these problems will require a new set of algebraic and combinatorial tools we develop in this paper.


To understand the problem of sampling flows and circulations using a Bernoulli factory it is useful to consider the a general problem in which we are given polytope $\cP \subseteq [0,1]^n$ and $n$ coins with biases $p_1, \hdots, p_n$ with the promise that the vector $p = (p_1, \hdots, p_n)$ is in $\cP \cap (0,1)^n$. The biases themselves are \emph{unknown} but we are allowed to flip the coins as many times as we want. Our goal is to sample a vertex $v$ of the polytope such that $\E[v] = p$. 

\cite{niazadeh2021combinatorial} show that there exists a sampling procedure iff the polytope $\cP$ is the intersection of the hypercube $[0,1]^n$ and an affine subspace $\{x \in \bbR^n; Ax = b\}$ where $A$ is a full-rank $k \times n$ matrix and $b \in \bbR^k$ is a vector. The proof constructs a sampling algorithm for all \emph{generic} polytopes of this type. We say that a polytope
$$\cP = \{x \in [0,1]^n; Ax=b\}$$
given by a rank-$k$ matrix $A$ is generic iff all of its vertices have exactly $k$ coordinates in $(0,1)$. For any fixed matrix $A$, the set of vectors $b$ for which the polytope is not generic forms a measure zero set. Hence, every polytope can be made generic by slightly perturbing $b$.

Nevertheless, most relevant polytopes in combinatorial optimization (matching, flows, circulations, $k$-subset, etc.) are non-generic since the polytopes are integral and their vertices have all coordinates belonging $\{0,1\}$. Although \cite{niazadeh2021combinatorial} show that a sampling procedure also exists for \emph{non-generic} polytopes, their proof is \emph{non-constructive}. The proof proceeds as follows: for a non-generic polytope $\cP = \{x \in [0,1]^n; Ax=b\}$ they consider a sequence of vectors $b_t \rightarrow b$ such that the polytopes $\cP_t = \{x \in \bbR^n; Ax=b_t\}$ are generic. They show that each sampling procedure for $\cP_t$ can be encoded by an object in a large but finite space. Since the space is finite, one of those objects must appear \emph{infinitely often}. They can then use this object to design a factory for the limit polytope $\cP$. Another non-constructive proof is provided in \cite{leme2022multiparameter} based on concentration arguments.

Even though we know that a sampling procedure exists for all non-generic polytopes, we don't know what this algorithm looks like. For the case of sampling perfect bipartite matchings, an explicit sampling procedure is constructed in \cite{niazadeh2021combinatorial} and for sampling $k$-subsets, the classic paper by \cite{sampford1967sampling} gives a sampling procedure.

\paragraph{Our Results} In this paper we bridge this gap by constructing Bernoulli factories for all flow-like polytopes, which include circulations in a directed graph and $k$-flows in a DAG. We say that a polytope is flow-like if the variables can be indexed by edges in a directed graph with directed edge set $E$ and node set $N$ and the polytope is of the form:
$$\cP = \left\{ x \in [0,1]^E; \sum_{e \in \textsf{Out}(v)}  x_e - \sum_{e \in \textsf{In}(v)}  x_e = d_v, \forall v \in N \right\}$$
where $\textsf{Out}(v)$ and $\textsf{In}(v)$ are the outgoing and incoming edges to node $v$ and $d \in \bbR^N$ is a demand vector. Setting $d_v = 0$ for all $v \in N$ we obtain the circulation polytope whose vertices are $\{0,1\}$-circulations in the graph $(N,E)$. This polytope is integral since the constraint matrix is totally unimodular (see \cite{schrijver2003combinatorial} for example). If we choose a source $s$, a sink $t$ and set $d_s = k$, $d_t = -k$ and $d_v = 0$ for all $v \neq s,t$ we obtain the polytope whose vertices are the integral $k$-flows from $s$ to $t$.

The factory which performs this sampling is explicit and simple to describe, but requires a tiny amount of additional notation:


\begin{itemize}
    \item Given a directed edge $e$, let $\overline{e}$ denote the reverse of this edge; i.e., if $e = (u, v)$, then $\overline{e} = (v, u)$.
    \item A \textit{directed tree} is a collection of $n-1$ edges (in $E$) which, when viewed as undirected edges, form a tree. Some (but not all) directed trees are \textit{arborescences} -- directed trees where all edges point towards a specific root node. Let $\cT$ be the set of all directed trees (with edges in $E$), and let $\cT(E)$ be the set of all directed trees which use edges specifically in $E$.
    \item Let $f \in \{0, 1\}^{E}$ be a vertex of $\cP$. Then for any edge $e \in E$, we define $\Flip_{f}(e)$ to equal $e$ if $f_{e} = 0$ and $\overline{e}$ if $f_{e} = 1$. We (slightly abusing notation) extend this to directed trees by, for any $T \in \cT$, letting $\Flip_{f}(T)$ be the element of $\cT$ formed by replacing each edge $e$ in $T$ with $\Flip_{f}(e)$. (In other words, we toggle the direction of all edges in $T$ that are also set in $f$).  
\end{itemize} 





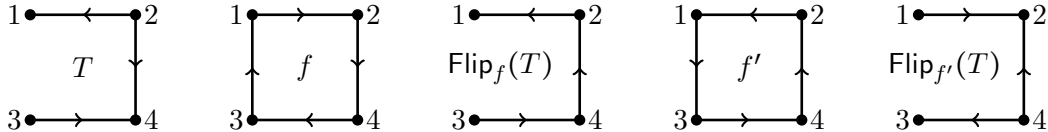
\begin{figure}[h]
\centering
\begin{tikzpicture}[scale=.7,decoration={
    markings,
    mark=at position 0.5 with {\arrow{>}}}
    ]

\node[circle,fill,inner sep=1.5pt] at (0,0) {};
\node[circle,fill,inner sep=1.5pt] at (2,0) {};
\node[circle,fill,inner sep=1.5pt] at (2,2) {};
\node[circle,fill,inner sep=1.5pt] at (0,2) {};
\draw[postaction={decorate},line width=1pt] (0,0) -- (2,0);
\draw[postaction={decorate},line width=1pt] (2,2) -- (0,2);
\draw[postaction={decorate},line width=1pt] (2,2) -- (2,0);
\node at (-.3,0) {$3$};
\node at (2.3,0) {$4$};
\node at (-.3,2) {$1$};
\node at (2.3,2) {$2$};
\node at (1,1) {$T$};

\begin{scope}[xshift=120]
\node[circle,fill,inner sep=1.5pt] at (0,0) {};
\node[circle,fill,inner sep=1.5pt] at (2,0) {};
\node[circle,fill,inner sep=1.5pt] at (2,2) {};
\node[circle,fill,inner sep=1.5pt] at (0,2) {};
\draw[postaction={decorate},line width=1pt] (0,0) -- (0,2);
\draw[postaction={decorate},line width=1pt] (0,2) -- (2,2);
\draw[postaction={decorate},line width=1pt] (2,2) -- (2,0);
\draw[postaction={decorate},line width=1pt] (2,0) -- (0,0);
\node at (-.3,0) {$3$};
\node at (2.3,0) {$4$};
\node at (-.3,2) {$1$};
\node at (2.3,2) {$2$};
\node at (1,1) {$f$};
\end{scope}

\begin{scope}[xshift=240]
\node[circle,fill,inner sep=1.5pt] at (0,0) {};
\node[circle,fill,inner sep=1.5pt] at (2,0) {};
\node[circle,fill,inner sep=1.5pt] at (2,2) {};
\node[circle,fill,inner sep=1.5pt] at (0,2) {};
\draw[postaction={decorate},line width=1pt] (0,0) -- (2,0);
\draw[postaction={decorate},line width=1pt] (2,2) -- (0,2);
\draw[postaction={decorate},line width=1pt] (2,0) -- (2,2);
\node at (-.3,0) {$3$};
\node at (2.3,0) {$4$};
\node at (-.3,2) {$1$};
\node at (2.3,2) {$2$};
\node at (.5,1) {$\Flip_f(T)$};
\end{scope}

\begin{scope}[xshift=360]
\node[circle,fill,inner sep=1.5pt] at (0,0) {};
\node[circle,fill,inner sep=1.5pt] at (2,0) {};
\node[circle,fill,inner sep=1.5pt] at (2,2) {};
\node[circle,fill,inner sep=1.5pt] at (0,2) {};
\draw[postaction={decorate},line width=1pt] (0,2) -- (0,0);
\draw[postaction={decorate},line width=1pt] (2,2) -- (0,2);
\draw[postaction={decorate},line width=1pt] (2,0) -- (2,2);
\draw[postaction={decorate},line width=1pt] (0,0) -- (2,0);
\node at (-.3,0) {$3$};
\node at (2.3,0) {$4$};
\node at (-.3,2) {$1$};
\node at (2.3,2) {$2$};
\node at (1,1) {$f'$};
\end{scope}

\begin{scope}[xshift=480]
\node[circle,fill,inner sep=1.5pt] at (0,0) {};
\node[circle,fill,inner sep=1.5pt] at (2,0) {};
\node[circle,fill,inner sep=1.5pt] at (2,2) {};
\node[circle,fill,inner sep=1.5pt] at (0,2) {};
\draw[postaction={decorate},line width=1pt] (2,0) -- (0,0);
\draw[postaction={decorate},line width=1pt] (0,2) -- (2,2);
\draw[postaction={decorate},line width=1pt] (2,0) -- (2,2);
\node at (-.3,0) {$3$};
\node at (2.3,0) {$4$};
\node at (-.3,2) {$1$};
\node at (2.3,2) {$2$};
\node at (.5,1) {$\Flip_{f'}(T)$};
\end{scope}

\end{tikzpicture}
\caption{The edges in $T$ form a direct tree but not an arborescence. When we flip the edges of $T$ using circulation $f$ we obtain an arborescence $\Flip_f(T)$ rooted at vertex $1$. On the other hand if we flip the edges in $T$ using circulation $f'$ we obtain a directed tree that is not an arborescence.}
\label{fig:flip}
\end{figure}

We illustrate these definitions in Figure \ref{fig:flip}. With this, we are ready to present the sampling algorithm: 

\begin{algorithm}[H]
\caption{{\sc Bernoulli Factory for Flow-like Polytopes}}
\label{alg:bernoulli_flow}
\begin{algorithmic} 
\State For each edge $e \in E$, flip the corresponding $p_e$-coin and set $f_e \in \{0,1\}$ equal to the outcome.
\State If $f$ does not form a valid flow (i.e., $f \notin \cP$), restart.
\State Choose a directed tree $T \in \cT(E)$ uniformly at random among all trees where $\Flip_{f}(T)$ is an arborescence rooted at vertex $1$. (We describe how to do this efficiently in Section \ref{sec:discussion}).
\State For each $e \in T$, flip the corresponding $p_e$-coin. If the outcome is equal to $f_e$, restart.
\State Output $f$.
\end{algorithmic}
\end{algorithm}

Our main result is that the above procedure successfully exactly samples a vertex of $\cP$ with the correct marginal probabilities (subject to some fairly weak conditions on the flow-like polytope $\cP$ and the interior point $p \in \cP$ which we discuss in Section \ref{sec:discussion}).

\begin{theorem}\label{thm:main}
Let $\cP$ be a flow-like polytope defined over a set $E$ of edges that are connected when viewed as undirected edges. If $p \in \cP \cap (0,1)^n$, then the Bernoulli factory in Algorithm \ref{alg:bernoulli_flow} samples a vertex $f$ of $\cP$ such that edge $e$ is included in the flow with probability $p_e$. 
\end{theorem}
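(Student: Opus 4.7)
The plan is to directly compute the probability that Algorithm~\ref{alg:bernoulli_flow} outputs each vertex $f \in \cP$ and verify that edge $e$ is included with probability exactly $p_e$. Let $A(f) = \prod_e p_e^{f_e}(1-p_e)^{1-f_e}$ be the probability of drawing $f$ in step~1, let $B(f,T) = \prod_{e \in T} p_e^{1-f_e}(1-p_e)^{f_e}$ be the acceptance probability of step~4 given $(f,T)$, and let $N(f) = |\{T \in \cT(E) : \Flip_f(T) \text{ is an arborescence at } 1\}|$. Because the algorithm restarts on every failure, the joint output distribution is proportional to the per-round success probability:
\[
\Pr[(f,T)] \;\propto\; \frac{A(f)\,B(f,T)}{N(f)}\cdot\Ind[f \in \cP]\cdot\Ind[\Flip_f(T)\text{ is an arborescence at }1].
\]
The crucial algebraic observation is that
\[
A(f)\cdot B(f,T) \;=\; \prod_{e \in T} p_e(1-p_e)\;\cdot\;\prod_{e \notin T} p_e^{f_e}(1-p_e)^{1-f_e},
\]
so each edge $e \in T$ contributes a factor symmetric under $f_e \mapsto 1-f_e$; moreover, once $T$ is fixed, the arborescence condition forces $f_e$ to a unique value for every $e \in T$ (since arborescence edges point away from the root).

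My plan for verifying the marginal $\Pr[f_e=1]=p_e$ is a cycle-exchange argument. Any two $\{0,1\}$-vertices of a flow-like polytope differ on a union of directed cycles in the underlying graph, so it suffices to understand the ratio $\mu(f')/\mu(f)$ when $f$ and $f'$ differ along a single directed cycle $C$, where $\mu(f) \propto A(f)\,W(f)/N(f)$ is the induced marginal distribution on $f$ and $W(f) = \sum_T B(f,T)\,\Ind[\Flip_f(T)\text{ is arb at }1]$. Step~1 contributes the ratio $A(f')/A(f) = \prod_{e \in C}(p_e/(1-p_e))^{f'_e - f_e}$, but a direct check on the simplest nontrivial example, a circulation around a triangle with $|C|=3$ and $f'_e - f_e = +1$ on all three edges, shows that the correct vertex-ratio is only $p/(1-p)$ and not $(p/(1-p))^3$. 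Thus $W(f)/N(f)$ must provide a precisely calibrated correction, and my task reduces to proving that this correction is exactly what is needed to give $\E[f_e]=p_e$ on every edge.

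The hardest step is establishing this identity for $W(f)/N(f)$. By the directed Matrix Tree Theorem, $N(f)$ is a cofactor of the directed Laplacian of the graph $G_f = \{\Flip_f(e) : e \in E\}$, and $W(f)$ is the corresponding cofactor weighted by the edge probabilities $p_e^{1-f_e}(1-p_e)^{f_e}$. Flipping $f$ along $C$ reverses the orientations in $G_f$ of exactly the edges of $C$, so the Laplacians $L_{f'}$ and $L_f$ differ only in the rows/columns indexed by vertices on $C$. I would try to prove the required cofactor-ratio identity either via a sign-reversing bijection between arborescences of $G_f$ and $G_{f'}$ that exchanges their intersection with $C$, or via a Schur-complement expansion of the two Laplacians along the cycle. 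Establishing this determinantal identity is where I expect the bulk of the work to lie, and it is presumably the ``new connection to algebraic combinatorics'' mentioned in the abstract; once it is in place, averaging over all vertices to verify $\sum_f \mu(f) f = p$ should follow routinely.
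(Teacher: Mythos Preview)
Your proposal contains a genuine error that would cause the argument to fail, and it also misses the paper's central idea.

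\textbf{The $1/N(f)$ factor.} You read Step~3 literally and obtain the per-round success weight $A(f)W(f)/N(f)$; you then set out to prove that this distribution has marginals $p_e$. That target is \emph{false}: $N(f)$ is not constant across vertices of $\cP$, and dividing by it destroys the marginals. For a concrete counterexample take $n=4$, $E=\{(1,2),(1,3),(1,4),(2,3),(2,4),(3,4)\}$ with $d_1=2$, $d_4=-2$, $d_2=d_3=0$ (the $2$-flow polytope). The four vertices are the four pairs of edge-disjoint $1\to 4$ paths. For $f$ supported on $\{(1,4),(1,2),(2,4)\}$ one computes $N(f)=3$, while for $f$ supported on $\{(1,2),(2,4),(1,3),(3,4)\}$ one gets $N(f)=4$. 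At $p=\tfrac14\sum_f f$ the quantities $A(f)$ and $W(f)$ are the same for all four vertices, so the distribution proportional to $A(f)W(f)$ is uniform and has the correct marginals; but the distribution proportional to $A(f)W(f)/N(f)$ gives $\E[f_{12}]=11/14\neq 3/4=p_{12}$. The paper's entire analysis is about the polynomials $P_{f,1}(x)=A(f)W(f)$ \emph{without} the $1/N(f)$ term; to match them, Step~3 should be read as ``sample $T$ uniformly from all of $\cT(E)$ and restart if $\Flip_f(T)\notin\Arb_1$''. Your planned determinantal identity for $W(f)/N(f)$ therefore cannot hold, and the ``cofactor-ratio'' you intend to establish does not exist.

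\textbf{The missing idea: root independence.} Even after removing the $1/N(f)$, your plan keeps the root fixed at vertex~$1$ and tries to compare two flows differing along a cycle, hoping for a Schur-complement or bijective identity relating the corresponding Laplacian cofactors. The paper takes a completely different route. It first proves that $P_{f,r}(x)$ is \emph{independent of the root $r$} for $x\in\cP$: via the affine map $M_f(x)_e = x_e(1-f_e)+(1-x_{\bar e})f_{\bar e}$ one lands in $\CircBar$, where the Laplacian is zero-line-sum and all principal cofactors agree by the Matrix-Tree theorem. This is where the algebraic combinatorics actually enters. With root independence in hand, to verify the marginal on edge $\eta=(s,t)$ the paper writes the $f_\eta=1$ side using root $t$ and the $f_\eta=0$ side using root $s$; in both cases $\eta\notin T$ is forced automatically, the factor $x_\eta(1-x_\eta)$ pulls out, and what remains is matched by a direct monomial-by-monomial bijection $f\mapsto f+g$ where $g$ toggles a single cycle through $\eta$ inside $T\cup\{\eta\}$. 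No cofactor ratios or Schur complements are needed. Your fixed-root cycle-exchange program never accesses this freedom, which is precisely why the ``hard determinantal step'' you anticipate remains hard.
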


To understand at a high level why Algorithm \ref{alg:bernoulli_flow} works (and why Theorem \ref{thm:main} is true), note that in the first two lines of Algorithm \ref{alg:bernoulli_flow} we sample a flow $f$ with probability proportional to $\prod_{e \in f} p_e \cdot \prod_{e \not\in f} (1-p_e)$. Superficially, this may seem like what we want (since edge $e$ is ``included'' with probability $p_e$ and ``excluded'' with probability $1-p_e$) but because the total mass of flows containing $e$ and not containing $e$ are not equal, this does not result in the correct marginals. The remaining lines of Algorithm \ref{alg:bernoulli_flow} can be thought of as applying a ``correction term'' to this sampling procedure that corrects the marginal distribution.

To actually prove Theorem \ref{thm:main} (and understand why the correction term is the way it is), it is useful to rephrase this sampling problem algebraically. Specifically, the above sampling procedure (and a large class of sampling procedures) can be thought of as sampling each vertex $f$ of $\cP$ with probability proportional to some polynomial $P_{f}(p)$. The constraints that the marginals are correct and that this sampling is implementable by a Bernoulli translate to algebraic constraints on the set of polynomials $P_{f}(p)$. The combinatorial and algebraic properties of arborescences allow us to construct a nice explicit solution to this set of algebraic constraints (of particular utility are various results stemming from the matrix-tree theorem, which allow us to root the arborescences at an arbitrary node; indeed, Algorithm \ref{alg:bernoulli_flow} works regardless of which node we label as ``vertex 1''). 

\paragraph{Why exact sampling?}
The Bernoulli factory problem originated from a question by \cite{asmussen1992stationarity} on how to \emph{exactly} simulate certain stochastic processes. In various situations obtaining samples is easy but understanding the \emph{exact} distribution they come from is difficult, for example when the samples are solutions to a stochastic differential equation or when they are the outcome of a simulation. Bernoulli factories provide a way to perform \emph{exact} computation having only samples, as opposed to precise-but-approximate computation. Approximate sampling is much simpler. One can flip each coin many times to obtain a good approximation to $p_e$ (by Chernoff bounds) and use those values to sample the desired object.

Exact sampling is essential in some of the applications. In mechanism design, it leads to reductions preserving exact Bayesian incentive compatibility, while approximate sampling leads to the weaker game theoretical guarantee of approximate Bayesian incentive compatibility~\citep{dughmi2017bernoulli}. In Bayesian inference, it prevents errors from compounding in an iterative procedure\citep{gonccalves2017exact,gonccalves2017barker}.

\section{Preliminaries}

\subsection{Multiparameter Bernoulli Factories}

We begin by formally defining multivariate Bernoulli factories.

\begin{definition}\label{def:factory}
A Bernoulli factory $\mathcal{F}$ with output in $V$ and input coins $(x_1,\hdots, x_n)$ corresponds to a (possibly infinite) rooted binary tree $\mathcal{T}$ where each node of $\mathcal{T}$ has either $2$ children (an internal node) or $0$ (a leaf). Each internal node is labelled either with a variable $x_i$ or with a constant $c \in (0,1)$. Each leaf is labelled with $v \in V$, representing the output of the factory upon reaching that node.

To execute the factory with coins $(x_1, \hdots, x_n)$ we start from the root and at each node we flip the coin given the label of that node (either one of the $x_i$-coins of unknown bias or a $c$-coin of known bias). Based on the outcome, we either take the left edge ($0$) or the right edge ($1$). If we reach a leaf, we output its label.
\end{definition}

A factory $\mathcal{F}$ is valid in a domain $D \subseteq [0,1]^n$ if for any input $x = (x_1, \hdots, x_n) \in D$ it reaches a leaf almost surely. Therefore we can view a factory as a random variable $\mathcal{F}$ taking values in $V$. The distribution of $\mathcal{F}$ will naturally depend on the input coins $x$.

\subsection{Bernstein Polynomials}

\cite{niazadeh2021combinatorial} define a way to construct Bernoulli factories using Bernstein polynomials, which we will also use here. A Bernstein monomial is an expression of the type $\prod_{i \in [n]} x_i^{a_i} (1-x_i)^{b_i}$ for non-negative integers $a_i$ and $b_i$. A Bernstein polynomial is a positive linear combination of Bernstein monomials $P(x) = \sum_i c_i M_i(x)$ where $c_i \geq 0$ and $M_i(x)$ are Bernstein monomials. The following lemma is shown in \cite{niazadeh2021combinatorial} and a proof is provided in Appendix \ref{appendix:bernstein} for completeness.

\begin{lemma}\label{lem:bernstein}
Given Bernstein polynomials $P_v(x)$ for each $v \in V$ such that $\sum_v P_v(x) \neq 0$, there is a Bernoulli factory that outputs each $v$ with probability $P_v(x) / \sum_{v'} P_{v'}(x)$. 
\end{lemma}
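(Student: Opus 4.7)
The plan is to build the factory by composing two standard gadgets: a one-shot exact test for a single Bernstein monomial, and a rejection-sampling loop that chooses which monomial to test.

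First I would show that for any Bernstein monomial $M(x) = \prod_{i} x_i^{a_i}(1-x_i)^{b_i}$ there is a finite binary subtree, using only the input coins, that outputs $1$ with probability exactly $M(x)$ and $0$ otherwise: for each $i$, flip the $x_i$-coin $a_i+b_i$ times, demanding $1$ on the first $a_i$ flips and $0$ on the remaining $b_i$; output $1$ iff every demand is met. Independence of the flips gives success probability exactly $M(x)$, and the subtree has $\sum_i (a_i+b_i)$ levels.

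Next, write $P_v(x) = \sum_j c_{v,j} M_{v,j}(x)$ with $c_{v,j} > 0$ and $M_{v,j}$ Bernstein monomials, and let $C := \sum_{v,j} c_{v,j}$. Using only the constant-bias coins that Definition \ref{def:factory} permits at internal nodes, I can sample a pair $(v,j)$ with probability $c_{v,j}/C$ by the standard sequential procedure: enumerate the finitely many pairs and, at step $k$, flip a coin whose bias is the $k$th target probability divided by the residual mass not yet eliminated. Each such bias lies in $(0,1)$ and the list is finite, so this is a finite subtree. The whole factory then repeatedly executes the following trial: sample a pair $(v,j)$ as above, run the monomial tester for $M_{v,j}$, output $v$ if it returns $1$, and otherwise restart from the root.

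For correctness, in a single trial the probability of outputting $v$ is $\sum_j (c_{v,j}/C)\, M_{v,j}(x) = P_v(x)/C$, and the total acceptance probability of a trial is $\sum_v P_v(x)/C$. Each $P_v$ is non-negative (positive combination of non-negative monomials), and the hypothesis $\sum_v P_v(x) \neq 0$ forces this sum to be strictly positive, so trials accept with uniformly positive probability on the promised domain; the factory therefore terminates almost surely, and conditional on termination outputs $v$ with probability $P_v(x)/\sum_{v'} P_{v'}(x)$, as required. There is no serious obstacle here; the only mildly delicate point is that the $c_{v,j}$ can be arbitrary positive reals, but the sequential sampling step handles that cleanly since the needed biases are simple ratios of the $c_{v,j}$, and the final check is just that both subroutines stay within the ingredients allowed by Definition \ref{def:factory}.
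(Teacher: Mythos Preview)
Your proof is correct and takes essentially the same approach as the paper: an exact monomial tester built from raw coin flips, combined with a Bernoulli-race rejection loop that samples a monomial proportionally to its coefficient and outputs the corresponding label on success. The paper organizes the race in two stages (choose $v \in V$ uniformly, then simulate a $P_v(x)$-coin after rescaling coefficients) rather than sampling the pair $(v,j)$ in one shot as you do, but this is only a cosmetic difference in how the same rejection sampler is grouped.
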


\begin{lemma}\label{lemma:bernstein}
Let $\cP \subseteq [0,1]^n$ be a polytope and $V$ its set of vertices. If for each $v \in V$ there is a Bernstein polynomial $P_v(x)$ such that $\sum_v (v-x) P_v(x) = 0$ and $\sum_v P_v(x) \neq 0$ then there exists a Bernoulli factory that for coins $x \in \cP \cap (0,1)^n$, it returns a vertex $v$ such that $\E[v] = x$.
\end{lemma}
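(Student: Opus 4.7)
The plan is to apply Lemma \ref{lem:bernstein} directly. Given the hypothesized Bernstein polynomials $\{P_v\}_{v \in V}$ with $\sum_v P_v(x) \neq 0$ on $\cP \cap (0,1)^n$, Lemma \ref{lem:bernstein} immediately produces a Bernoulli factory $\mathcal{F}$ which, on input $x$, outputs vertex $v$ with probability $P_v(x) / \sum_{v'} P_{v'}(x)$. This is well-defined because every Bernstein polynomial is nonnegative on $[0,1]^n$ (being a positive linear combination of Bernstein monomials, each a product of $x_i$'s and $(1-x_i)$'s), so the ratio lies in $[0,1]$, and the denominator is nonzero by assumption.

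Next I would verify the marginal condition. By construction,
\[
\E[\mathcal{F}(x)] \;=\; \sum_{v \in V} v \cdot \frac{P_v(x)}{\sum_{v' \in V} P_{v'}(x)} \;=\; \frac{\sum_{v} v\, P_v(x)}{\sum_{v'} P_{v'}(x)}.
\]
The hypothesis $\sum_v (v - x) P_v(x) = 0$ rearranges to $\sum_v v\, P_v(x) = x \sum_v P_v(x)$. Substituting into the previous display gives $\E[\mathcal{F}(x)] = x$, as required.

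There is essentially no main obstacle: the nontrivial content is all packaged in the prior Lemma \ref{lem:bernstein}, which handles the construction of the factory from a ratio of Bernstein polynomials. The present lemma is just the observation that the algebraic identity $\sum_v (v-x)P_v(x) = 0$ is exactly the condition needed for this ratio distribution to have mean $x$. The only minor things worth flagging explicitly in the writeup are (i) non-negativity of $P_v(x)$ on $[0,1]^n$ (so Lemma \ref{lem:bernstein} applies and the expression defines a valid distribution), and (ii) that the factory outputs a vertex of $\cP$, which is built into the indexing set $V$.
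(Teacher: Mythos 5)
Your proposal is correct and follows exactly the paper's argument: invoke Lemma \ref{lem:bernstein} to obtain a factory sampling each $v$ with probability $P_v(x)/\sum_{v'}P_{v'}(x)$, then note that $\sum_v (v-x)P_v(x)=0$ forces the mean to be $x$. The extra remarks about non-negativity are fine but not a departure from the paper's route.
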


\begin{proof}
By the previous lemma, there is a factory that samples each vertex $v$ with probability proportional to $P_v(x) / \sum_{v'} P_{v'}(x)$, hence:
$\E[v] = \sum_v v \cdot P_v(x) / (\sum_{v} P_{v}(x)) = x$.
\end{proof}

\subsection{Matrix-Tree Theorem}\label{sec:matrix_tree}

The Matrix-Tree Theorem relates the minors of the Laplacian matrix of a directed weighted graph with a polynomial defined over the arborescences of this graph. We begin by defining these concepts. Given a directed graph on $n$ nodes, an \textit{arborescence} rooted at $i$ is a collection $A$ of $(n-1)$ directed edges where from any vertex $v$ there is a unique walk from $v$ to $i$ using only the directed edges in $A$ (i.e., arborescences are trees directed towards $i$). Let $\Arb_{i}$ be the set of arborescences rooted at $i$.

Now we define the Laplacian. Given edge weights  $[x_{i,j}]_{n\times n}$, the $n$-by-$n$ weighted directed Laplacian matrix associated with this graph is defined as:
$$
\forall i,j\in[n]:~\mathcal{L}_{i,j} =
\left\{
	\begin{array}{ll}
		\sum_{k\neq i}x_{k,i}  & \mbox{if } i=j \\
		-x_{i,j} & \mbox{if } i\neq j
	\end{array}
\right.
$$
For any $r\in[n]$, let $\mathcal{L}^{(r)}$ denote the $(n-1)$-by-$(n-1)$ submatrix of $\mathcal{L}$ obtained by removing the row and the column corresponding to $r$. 

\begin{theorem}[\emph{Matrix-Tree Theorem~\citep{tutte1948dissection}}]\label{thm:tutte}
Let $\mathcal{L}$ be the $n$-by-$n$ weighted directed Laplacian matrix with weights $[x_{i,j}]$. Then for any $r\in[n]$, 
$$
\det[\mathcal{L}^{(r)}]=\sum_{A \in \Arb_{r}(n)}\prod_{(u, v) \in A} x_{u, v}.
$$
\end{theorem}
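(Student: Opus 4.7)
The plan is to prove the identity via the Cauchy--Binet formula applied to a natural rank-$1$ factorization of $\mathcal{L}$ by edges. Each directed edge $e=(u,v)$ contributes $x_{u,v}(\mathbf{e}_v-\mathbf{e}_u)\mathbf{e}_v^{\top}$ to $\mathcal{L}$: the $+x_{u,v}$ at position $(v,v)$ accounts for $e$'s share of the in-degree sum $\mathcal{L}_{v,v}=\sum_{k\neq v}x_{k,v}$, while the $-x_{u,v}$ at $(u,v)$ matches the off-diagonal definition. Packing these contributions into two $n\times|E|$ matrices $B$ and $C$, where the column indexed by $e=(u,v)$ is $\mathbf{e}_v-\mathbf{e}_u$ in $B$ and $\mathbf{e}_v$ in $C$, yields $\mathcal{L}=B\,\mathrm{diag}(x)\,C^{\top}$, and hence $\mathcal{L}^{(r)}=B^{(r)}\,\mathrm{diag}(x)\,(C^{(r)})^{\top}$ after deleting the $r$-th row from both.

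Next I would apply the Cauchy--Binet formula to obtain
\[
\det \mathcal{L}^{(r)} \;=\; \sum_{\substack{S\subseteq E \\ |S|=n-1}} \det B^{(r)}_S \cdot \det C^{(r)}_S \cdot \prod_{e\in S} x_e,
\]
and then characterize the surviving $(n-1)$-subsets $S$. Since the column of $C^{(r)}$ indexed by $e=(u,v)$ is the standard basis vector $\mathbf{e}_v$ (or $\mathbf{0}$ when $v=r$), $\det C^{(r)}_S$ is nonzero exactly when the heads of the edges in $S$ biject with $[n]\setminus\{r\}$; equivalently, in the subgraph $S$ every vertex other than $r$ has exactly one incoming edge in $S$ and $r$ has none, and in that case $|\det C^{(r)}_S|=1$. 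Among such $S$, the factor $\det B^{(r)}_S$ vanishes whenever $S$ contains a directed cycle $v_1\to\cdots\to v_k\to v_1$, since the identity $\sum_i(\mathbf{e}_{v_{i+1}}-\mathbf{e}_{v_i})=0$ produces a column dependence. Combined with the in-degree-one condition, being acyclic forces $S$ to be a spanning arborescence anchored at $r$, and in this case ordering the edges by reverse topological order makes $B^{(r)}_S$ triangular with diagonal entries $\pm 1$, so $|\det B^{(r)}_S|=1$.

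Finally, I would verify that the signs combine to $+1$ uniformly over arborescences: by using the \emph{same} reverse topological ordering of $S$ for both $B^{(r)}_S$ and $C^{(r)}_S$, and permuting the rows of each matrix to line up with the head of the corresponding edge, both determinants reduce to a product of $+1$'s. Granting this, only arborescences rooted at $r$ survive in the Cauchy--Binet sum, each contributing exactly $\prod_{e\in S} x_e$, which matches the stated identity. The main obstacle is this last sign verification: although each of $\det B^{(r)}_S$ and $\det C^{(r)}_S$ is transparently $\pm 1$, showing that their product is $+1$ for \emph{every} arborescence requires careful bookkeeping of row and column permutations, and the cleanest route is to commit to a single canonical ordering (processing leaves first) so that the sign computations in $B^{(r)}_S$ and $C^{(r)}_S$ cancel term-by-term rather than relying on ad hoc parity arguments.
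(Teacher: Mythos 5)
The paper does not prove this statement at all: it imports the directed Matrix--Tree Theorem as a black box with a citation to Tutte, so there is no ``paper proof'' to compare against. Your Cauchy--Binet argument is the standard self-contained proof and it is essentially complete. The factorization $\mathcal{L}=B\,\mathrm{diag}(x)\,C^{\top}$ is correct for the Laplacian as defined (each edge $(u,v)$ contributes $+x_{u,v}$ to $\mathcal{L}_{v,v}$ and $-x_{u,v}$ to $\mathcal{L}_{u,v}$), the characterization of the surviving subsets $S$ is right, and the sign verification you flag as the main obstacle does go through exactly as you sketch: ordering the non-root vertices so that each vertex precedes its parent makes $B^{(r)}_S$ unitriangular and $C^{(r)}_S$ the identity. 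A slicker way to dispatch the signs is to note that $\det B^{(r)}_S\cdot\det C^{(r)}_S=\det\bigl(B^{(r)}_S (C^{(r)}_S)^{\top}\bigr)$ is the $r$-minor of the unit-weight Laplacian of the tree $S$ itself, which is unitriangular in that same ordering; this avoids tracking two permutation signs separately.

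One point you gloss over with the phrase ``arborescence anchored at $r$'': your in-degree condition (every vertex other than $r$ has exactly one \emph{incoming} edge in $S$, and $r$ has none) together with acyclicity characterizes trees in which all edges are directed \emph{away} from $r$. That is indeed what this Laplacian (whose diagonal records weighted in-degrees, $\mathcal{L}_{ii}=\sum_{k\neq i}x_{k,i}$) counts --- check $n=2$: $\det\mathcal{L}^{(1)}=x_{12}$, the weight of the edge leaving vertex $1$. The paper, however, defines $\Arb_r$ as trees directed \emph{towards} $r$, so the statement as literally printed has an orientation mismatch; the version you prove is the correct one for the given Laplacian. If you want the sum over in-arborescences as the paper's $\Arb_r$ is defined, you would instead factor the out-degree Laplacian via $x_{u,v}(\mathbf{e}_u-\mathbf{e}_v)\mathbf{e}_u^{\top}$. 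You should state explicitly which orientation your argument yields rather than leaving ``anchored'' ambiguous, but this is a convention issue in the source statement, not a flaw in your proof.
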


We will use the Matrix-Tree theorem together with a result about the principal minors of a \emph{zero-line-sum (ZLS)} matrix, i.e., a matrix whose rows and columns all sum to 0, are equal. We refer to \cite{niazadeh2021combinatorial} for a proof.

\begin{lemma}[\emph{ZLS matrices have equal cofactors}]\label{thm:zls}
Let $A$ be an $n$-by-$n$ matrix such that  $\sum_{j=1}^{n} A_{ij} = \sum_{j=1}^{n} A_{ji} = 0$ for each $i \in [n]$. If $A^{(r)}$ is the matrix obtained by removing the $r$-th column and $r$-th row then:
$$\det[A^{(r)}] = \det[A^{(r')}]$$
\end{lemma}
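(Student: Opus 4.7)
The plan is to work with the classical adjugate matrix $\mathrm{adj}(A)$, whose $(i,j)$ entry is $(-1)^{i+j}\det(A_{ji})$ where $A_{ji}$ denotes $A$ with row $j$ and column $i$ deleted. In particular, the diagonal entries of $\mathrm{adj}(A)$ are exactly the principal minors $\det(A^{(r)})$ (the signs $(-1)^{2r}$ are all $+1$), so the entire lemma reduces to showing that $\mathrm{adj}(A)$ has constant diagonal.

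The zero-line-sum hypothesis translates algebraically into $A\mathbf{1}=0$ and $\mathbf{1}^{\top}A=0$, where $\mathbf{1}$ is the all-ones column vector. In particular $\det(A)=0$, so the standard Cramer identities
\[
A\cdot\mathrm{adj}(A)=\det(A)\,I=0, \qquad \mathrm{adj}(A)\cdot A=0
\]
tell us that every column of $\mathrm{adj}(A)$ lies in the right null space of $A$ and every row lies in the left null space of $A$. The rest of the argument is a short rank case analysis.

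In the main case $\mathrm{rank}(A)=n-1$, both null spaces are one-dimensional; since $\mathbf{1}$ and $\mathbf{1}^{\top}$ are known to lie in them, they are spanned by $\mathbf{1}$ and $\mathbf{1}^{\top}$ respectively. Hence each column of $\mathrm{adj}(A)$ has the form $c_{j}\mathbf{1}$ for some scalar $c_{j}$, and simultaneously each row has the form $d_{i}\mathbf{1}^{\top}$ for some scalar $d_{i}$. Equating entries $\mathrm{adj}(A)_{ij}$ computed in the two ways forces $c_{j}=d_{i}$ for all $i,j$, so $\mathrm{adj}(A)$ is a constant matrix and its diagonal entries are in particular all equal. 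In the degenerate case $\mathrm{rank}(A)<n-1$, every $(n-1)\times(n-1)$ submatrix of $A$ still has rank at most $\mathrm{rank}(A)<n-1$, so $\det(A^{(r)})=0$ for every $r$ and the claim is trivial.

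The main care point is the last step of the generic case: the one-sided information that columns of $\mathrm{adj}(A)$ are multiples of $\mathbf{1}$ only forces constancy down each column, and it is only when one simultaneously exploits the column-sum hypothesis of $A$ (i.e., that $\mathrm{adj}(A)\cdot A=0$) that the rows become multiples of $\mathbf{1}^{\top}$ and the two rank-one structures can be combined into global constancy. Everything else is a routine application of the adjugate identity together with the observation that the kernels automatically contain the all-ones vector.
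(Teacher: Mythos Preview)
Your argument is correct. The adjugate identity $A\cdot\mathrm{adj}(A)=\mathrm{adj}(A)\cdot A=\det(A)\,I$, combined with the fact that $A\mathbf{1}=0$ and $\mathbf{1}^{\top}A=0$ force $\det(A)=0$ and (in the rank-$(n-1)$ case) pin down both null spaces as the span of $\mathbf{1}$, does exactly what you claim: the columns of $\mathrm{adj}(A)$ are multiples of $\mathbf{1}$, the rows are multiples of $\mathbf{1}^{\top}$, and combining these forces $\mathrm{adj}(A)$ to be a constant matrix. The low-rank case is handled correctly as well, since all $(n-1)\times(n-1)$ minors vanish when $\mathrm{rank}(A)\le n-2$.

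As for comparison with the paper: the paper does not actually prove this lemma in the body of the text---it simply states the result and refers the reader to \cite{niazadeh2021combinatorial} for a proof. So there is no ``paper's own proof'' to compare against here; your adjugate-based argument is a clean, self-contained proof of the stated lemma and would serve perfectly well in its place.
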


\subsection{Flow-Based Polytopes}

We will define a class of polytopes that we call \textit{flow-based polytopes}.

Consider a complete directed graph $G$ on $n$ vertices (so the set of all possible edges is the product set $[n] \times [n]$, i.e. all ordered pairs of vertices). We will choose a subset $E \subseteq E_{0} = \{(u, v) ; u \in [n], v \in [n], u \neq v\}$ of non-loop edges and call these edges \textit{variable edges} ($E_0$ is the base set of all possible non-loop edges). We also define a \textit{demands} $d_v$ for each $v \in [n]$ associating vertices to integer ``demands'' (technically, these are negative demands).

Given $E$ and $d$, we can define a flow-based polytope. Our polytope will have $|E|$ variables, which we will write as $x_e$ (indexed by elements of $E$). Each $x_{e}$ must lie in $[0, 1]$. In addition, we will impose the following $n$ constraints, for each $v \in [n]$:

\begin{equation}\label{eq:affine}\sum_{i}x_{vi} - \sum_{i}x_{iv} = d_v.\end{equation}

\noindent
Some edges in the above sums may not belong to $E$; in that case we abuse notation and assume $x_{e} = 0$ where these terms appear. One interpretation of the above constraints is that we are looking for a ``flow'' in $G$, where each directed edge $e$ has $x_{e}$ units of flow and the net flow out of each vertex $v$ must equal $d_v$.

Let $\cP \subseteq [0, 1]^{E}$ denote the resulting polytope. Since flow constraints are totally unimodular, $\cP$ has integral 0/1-vertices. We will let $\cV(\cP)$ denote the set of vertices of $\cP$. We will generally use the letter $f$ to denote an individual element of $\cV(\cP)$ (think ``flow''; unfortunately, $v$ is already reserved for vertices of $G$). 
Some natural examples of flow-based polytopes:

\begin{itemize}
    \item \textbf{Circulations}: Let $E = E_{0} = \{(u, v); u \in [n], v \in [n], u \neq v\}$ (all non-loops) and $d_v = 0$ for all $v \in [n]$. The resulting polytope is the \textbf{circulation polytope}. Vertices of this polytope correspond to \textbf{circulations}: directed graphs where each vertex has equal in-degree and out-degree.
    \item \textbf{Bipartite Perfect Matchings}: Let $n = 2m$. Let $E = \{(u, v+m); u \in [m], v \in [m]\}$ and
    
    $$d_v = \begin{cases}
    1 & \mbox{ if } 1 \leq u \leq m \\
    -1 & \mbox{ if } m+1 \leq u \leq 2m.
    \end{cases}$$
    
    The resulting polytope is the \textbf{perfect matching polytope} (alternately known as the Birkhoff-von Neumann polytope). Vertices of this polytope correspond to \textbf{perfect matchings} in the bipartite graph $G$.
    \item \textbf{Flows in a DAG}. Let $E = \{(u, v); u \in [n], v \in [n], u < v\}$ (edges from a smaller labeled vertex to a larger labeled vertex). Fix an integer $k > 0$, and let 
    
    $$d_v = \begin{cases}
    k & \mbox{ if } u = 1 \\
    -k & \mbox{ if } u = n\\
    0 & \mbox{ if } 1 < u < n.
    \end{cases}$$
    
    The resulting polytope is the \textbf{$k$-flow polytope}. Vertices of this polytope correspond to \textbf{unions of $k$ edge-disjoint paths} between $1$ and $n$. 
\end{itemize}

\subsection{The Circulation Polytope}

The circulation polytope mentioned above will end up being particularly important in the analysis that follows, so we will give it a special name -- $\Circ$ -- and discuss it a little further here.

The circulation polytope $\Circ$ is the $n(n-1)$-dimensional polytope in variables $x_{e}$ (where $e \in \{(u, v) ; u \in [n], v\in[n], u \neq v\}$) with constraints $x_{e} \in [0, 1]$ and (for all $v \in [n]$)

\begin{equation} \label{eq:circbar}
    \sum_{i \neq v} x_{vi} - \sum_{i \neq v} x_{iv} = 0.
\end{equation}

Sometimes we will want to consider just the hyperplane defined by \eqref{eq:circbar} above (without intersecting it with the hypercube $[0, 1]^{E_{0}}$). We will call this hyperplane $\CircBar$ (note that $\CircBar$ can also be thought of as the affine span of $\Circ$). Similarly, we will let $\overline{\cP}$ denote the set of points satisfying just equation \eqref{eq:affine} (i.e., the affine span of $\cP$). Note that all flow-based polytopes are ``parallel'' to the circulation polytope in the following sense.

\begin{lemma}
Let $\cP$ be a flow-based polytope, and let $x, x'$ be points in $\cP$. Then $x - x' \in \CircBar$.\footnote{Technically $x-x'$ and $\CircBar$ lie in different spaces, but there is a natural embedding from $[0, 1]^{E}$ into $[0, 1]^{E_{0}}$.}
\end{lemma}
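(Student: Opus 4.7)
The plan is to verify directly that the coordinate-wise difference $x - x'$, once embedded into $[0,1]^{E_0}$ by setting coordinates outside $E$ to zero, satisfies the defining linear equations of $\CircBar$. Since both $x$ and $x'$ lie in $\cP$, they each satisfy the affine constraint \eqref{eq:affine} at every vertex $v \in [n]$ with the same right-hand side $d_v$; subtracting these two equations gives an identity with right-hand side zero, which is precisely equation \eqref{eq:circbar}.

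Concretely, I would first make the embedding $\iota : \bbR^E \to \bbR^{E_0}$ explicit: for $y \in \bbR^E$ and $e \in E_0$, set $\iota(y)_e = y_e$ if $e \in E$ and $\iota(y)_e = 0$ otherwise. Then I would fix an arbitrary $v \in [n]$ and compute
$$
\sum_{i \neq v} \iota(x - x')_{v i} - \sum_{i \neq v} \iota(x - x')_{i v}
  = \Bigl(\sum_{i} x_{v i} - \sum_{i} x_{i v}\Bigr) - \Bigl(\sum_{i} x'_{v i} - \sum_{i} x'_{i v}\Bigr),
$$
where the sums on the right use the convention (already adopted in \eqref{eq:affine}) that $x_e = 0$ for $e \notin E$. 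By \eqref{eq:affine} applied to $x$ and to $x'$, both parenthesized expressions equal $d_v$, so the difference is $0$. Since $v$ was arbitrary, $\iota(x - x')$ satisfies every defining equation of $\CircBar$.

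There is essentially no obstacle here beyond bookkeeping: the main subtlety is being careful that the sums appearing in the constraints for $\cP$ are indexed over $E$ while those defining $\CircBar$ are indexed over $E_0$, which is exactly what the footnoted embedding is meant to reconcile. Once the embedding is fixed, the result is immediate from linearity of the flow-conservation constraints.
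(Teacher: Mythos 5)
Your proof is correct and follows the same route as the paper's: subtract the flow-conservation constraints \eqref{eq:affine} for $x$ and $x'$ at each vertex to obtain \eqref{eq:circbar} for the difference. The only difference is that you make the embedding $\iota$ explicit, which the paper handles implicitly via its footnote and zero-extension convention.
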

\begin{proof}
Since $x$ and $x'$ lie in $\cP$, it is true that: 
$\sum_{i}x_{vi} - \sum_{i}x_{iv} = d_{v},$
and
$\sum_{i}x'_{vi}  - \sum_{i}x'_{iv} = d_{v}.$
Subtracting these two equations gives
$\sum_{i}(x_{vi} - x'_{vi}) - \sum_{i}(x_{iv} - x'_{iv}) = 0,$
so $x -x'$ satisfies \eqref{eq:circbar} and $x - x' \in \CircBar$.
\end{proof}

\section{Bernoulli Factory Construction}


We now analyze Algorithm \ref{alg:bernoulli_flow} and show that it samples a vertex $f$ of $\cP$ such that each edge $e$ is included in the flow with probability $p_e$ as stated in Theorem \ref{thm:main}. Our proof will be algebraic and based on the structure of certain Bernstein polynomials. Before defining them, we recall a few definitions from the introduction:

\begin{itemize}
    \item A directed tree is a collection of $n-1$ edges that form a tree when viewed as undirected edges. The set $\cT(E)$ is the set of all directed trees with edges in $E$. We omit $E$ when clear from the context.
    \item If for a given directed tree $T \in \cT$ there is a vertex $r$ such that there is a walk in $T$ from $i$ to $r$ for every node $i$, we say that $T$ is an arborescence rooted at $r$. The set $\Arb_r$ is the set of all arborescences rooted at $r$.
    \item Given a vertex $f$ of $\cP$ and an edge $e \in E$, we define $\Flip_f(e) = e$ if $f_e = 0$ and $\Flip_f(e) = \bar e$ (the reverse of $e$) if $f_e = 1$. For a directed tree $T$ we define $\Flip_f(T) = \{\Flip_f(e); e \in T\}$.
\end{itemize}

Now define the following polynomials for each vertex $f$ of $\cP$ and root vertex $r \in [n]$.

\begin{equation}\label{eq:factory}
    P_{f,r}(x) = \left(\prod_{e \in E} x_e^{f_{e}} (1-x_{e})^{1-f_{e}}\right) \cdot \left(\sum_{T | \Flip_{f}(T) \in \Arb_{r}} \left(\prod_{e\in T}x_{e}^{1-f_{e}}(1-x_{e})^{f_e} \right)\right) .
\end{equation}

Observe that Algorithm \ref{alg:bernoulli_flow} samples each $f$ with probability proportionally to $P_{f,1}(p_1, \hdots, p_n)$. Since those are Bernstein polynomials, we can obtain an equivalent sampling procedure by plugging this polynomial in Lemma \ref{lemma:bernstein}. To show that each edge is sampled with the correct probabilities, it is enough to check the conditions in Lemma \ref{lemma:bernstein}.

\begin{theorem}\label{thm:cbf_cond}
Let $\cP$ be a flow-like polytope defined over a set of edges $E$ that are connected when viewed as undirected edges and $\cP \cap (0,1)^n \neq \emptyset$ and let $V$ be the set of its vertices. The polynomials $P_{f,r}(x)$ defined in equation \eqref{eq:factory} for each $f \in V$ and root $r \in [n]$ have the property that \begin{equation}\label{eq:cbf}
\sum_{f \in V} (f-x)  P_{f,r}(x) = 0, \forall x \in \cP
\end{equation}
\begin{equation}\label{eq:fnz}
\sum_{f \in V} P_{f,r}(x) > 0, \forall x \in \cP \cap (0,1)^n
\end{equation}
\end{theorem}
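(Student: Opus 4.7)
The plan is to leverage the Matrix-Tree Theorem (MTT) to reinterpret $P_{f,r}(x)$ as a Bernstein-weighted determinant, then exploit the ZLS cofactor identity and a combinatorial exchange. First, I define the weighted directed Laplacian $\mathcal{L}_f(x)$ on $[n]$ with off-diagonal weights $w^{f,x}_{(u,v)} := \Ind\{(u,v) \in E,\ f_{(u,v)}=0\}\, x_{(u,v)} + \Ind\{(v,u) \in E,\ f_{(v,u)}=1\}\, (1 - x_{(v,u)})$. By Theorem~\ref{thm:tutte}, tracking the (up to two) pre-images of each arborescence edge under $\Flip_f$ shows that the inner sum of $P_{f,r}$ equals $\det[\mathcal{L}_f^{(r)}(x)]$, so $P_{f,r}(x) = B_f(x)\cdot \det[\mathcal{L}_f^{(r)}(x)]$ where $B_f(x) := \prod_e x_e^{f_e}(1-x_e)^{1-f_e}$. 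Next I verify $\mathcal{L}_f(x)$ is zero-line-sum when $x \in \cP$ and $f \in V$: columns sum to $0$ by construction of the Laplacian, while the row-sum at vertex $i$ reduces to $\bigl(\sum_{e \in \textsf{Out}(i)} x_e - \sum_{e \in \textsf{In}(i)} x_e\bigr) - \bigl(\sum_{e \in \textsf{Out}(i)} f_e - \sum_{e \in \textsf{In}(i)} f_e\bigr) = d_i - d_i = 0$ after collecting terms. By Lemma~\ref{thm:zls}, $\det[\mathcal{L}_f^{(r)}(x)]$ is independent of $r$; I denote it $D_f(x)$.

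For condition \eqref{eq:cbf}, decomposing $B_f = x_e^{f_e}(1-x_e)^{1-f_e}\, \tilde B_f$ (where $\tilde B_f$ drops the $e$-factor) and simplifying reduces $\sum_f (f_e - x_e) P_{f,r}(x) = 0$ to the cleaner identity $\sum_{f \in V,\, f_e = 0} \tilde B_f(x)\, D_f(x) = \sum_{f \in V,\, f_e = 1} \tilde B_f(x)\, D_f(x)$. Writing $e = (a,b)$ and choosing root $r = a$, flipping $f_e$ alone is a rank-$1$ update of $\mathcal{L}_f^{(a)}$: only the $(b,b)$-diagonal entry changes, by $-x_e$, yielding $D_{f'} - D_f = -x_e\, \det[\mathcal{L}_f^{(a,b)}]$ via multilinear expansion (and symmetrically with $r=b$, $D_{f'} - D_f = (1-x_e)\det[\mathcal{L}_f^{(a,b)}]$). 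I would then establish the identity via a cycle-exchange argument: each $f \in V$ with $f_e = 0$ pairs with an $f' \in V$ obtained by flipping $f$ along a fundamental cycle $C \ni e$ (which preserves $V$ by flow balance); using the two root choices allowed by the ZLS property together with the rank-$1$ update formulas, the determinantal corrections and the cycle-flip Bernstein-weight shifts cancel on paired configurations.

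For condition \eqref{eq:fnz}, each $P_{f,r}(x)$ is non-negative on $(0,1)^n$, so it suffices to exhibit $f \in V$ with $D_f(x) > 0$ — equivalently, with the directed graph $G_f := \{\Flip_f(e) : e \in E\}$ admitting a spanning arborescence rooted at $r$. Since $\cP \cap (0,1)^n \neq \emptyset$ and $E$ is undirectedly connected, writing a fixed $p \in \cP \cap (0,1)^n$ as a convex combination of vertices of $V$ yields at least one $f$ for which $G_f$ is strongly connected enough to contain an arborescence rooted at any $r$. The hard part will be the cycle-exchange pairing in condition \eqref{eq:cbf}: engineering it so that the rank-$1$ determinantal correction exactly matches the Bernstein-weight shift requires careful coordination between the cycle flip (which may touch many edges of $V$) and the two-root determinantal identities, and handling the case where $V_0$ and $V_1$ have unequal cardinality (where the pairing collapses several configurations together).
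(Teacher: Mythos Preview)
Your root-independence argument is essentially the paper's: your weights $w^{f,x}_{(u,v)}$ are exactly the paper's map $M_f(x)$, and your ZLS verification matches Lemma~\ref{lem:ptocirc}. So far so good.

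The gap is in your argument for \eqref{eq:cbf}. Your rank-$1$ update formula $D_{f'}-D_f=-x_e\det[\mathcal{L}_f^{(a,b)}]$ is correct when $f'$ differs from $f$ only in the single coordinate $f_e$; but then $f'\notin V$, so it never appears in either side of the identity you want. Conversely, your proposed cycle flip $f\mapsto f'$ along some cycle $C\ni e$ does stay in $V$, but it changes many coordinates, so (i) $\tilde B_{f'}\neq\tilde B_f$, and (ii) the change $D_{f'}-D_f$ is no longer a rank-$1$ perturbation of $\mathcal{L}_f^{(a)}$ --- every vertex on $C$ has its row/column altered. You never explain how to bridge these two incompatible pictures, and the sentence ``the determinantal corrections and the cycle-flip Bernstein-weight shifts cancel on paired configurations'' is not a mechanism. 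There is also no canonical cycle through $e$ for a given $f$, and nothing guarantees the putative pairing is a bijection (you yourself flag the unequal-cardinality issue).

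The paper's proof resolves this by \emph{re-expanding} the determinant back into a sum over trees and working at the level of $(f,T)$ pairs rather than flows alone. Fixing the edge $\eta=(s,t)$, it uses root independence to root at $t$ on the $f_\eta=1$ side and at $s$ on the $f_\eta=0$ side; in both cases $\eta\notin T$, which lets one rewrite each side as $\sum_{T}\sum_{f}(\text{monomial depending only on }f|_{E\setminus(T\cup\{\eta\})})\cdot\prod_{e\in T\cup\{\eta\}}x_e(1-x_e)$. Then, \emph{for each fixed $T$}, there is a single explicit element $g\in\CircBar$ supported on $T\cup\{\eta\}$ (built from the $s$--$t$ path in $T$) such that $f\mapsto f+g$ bijects the two sides and preserves $f$ outside $T\cup\{\eta\}$. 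This tree-by-tree decomposition is the missing idea; once you compress $D_f$ to a determinant you lose the handle that makes the bijection work.

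Your sketch for \eqref{eq:fnz} is also too thin: saying a convex-combination representation ``yields at least one $f$ for which $G_f$ is strongly connected enough'' is not an argument. The paper proves (Lemma~\ref{thm:positive_factory}) that for \emph{any} $f\in V$, the edge set $\Flip_f(E)$ is strongly connected, by showing every edge of $\Flip_f(E)$ lies on some circulation in $\Flip_f(E)$ (constructed from $f'-f$ for a suitable $f'\in V$), and then invoking Lemmas~\ref{lem:pos1} and~\ref{lem:pos2}.
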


Theorem \ref{thm:main} is therefore a direct consequence of Theorem \ref{thm:cbf_cond}. The remainder of the paper will be dedicated to proving Theorem \ref{thm:cbf_cond}. Later in Section \ref{sec:discussion} we discuss the conditions in Theorem \ref{thm:cbf_cond} and how they can be relaxed.


\subsection{Root independence}

The first main observation is the remarkable fact that the values of the polynomials $P_{f,r}(x)$ for points $x \in \cP$ do not actually depend on the choice of the root $r$. More formally:

\begin{lemma}[Root Independence Lemma]\label{lemma:root-independence}
Let $\cP$ be a flow-like polytope $\cP$ and $P_{f,r}(x)$ be the polynomials defined as in equation \eqref{eq:factory} for each vertex $f$. Then given two distinct roots $r,s \in [n]$, then it holds that $P_{f,r}(x) = P_{f,s}(x)$ for all $x \in \cP$.
\end{lemma}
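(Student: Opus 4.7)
The plan is to recast the sum $P_{f,r}(x)$ as a weighted Laplacian cofactor via the Matrix-Tree theorem and then invoke Lemma~\ref{thm:zls}. Since the prefactor $\prod_{e \in E} x_e^{f_e}(1-x_e)^{1-f_e}$ in \eqref{eq:factory} is manifestly independent of $r$, it suffices to prove that
$$Q_{f,r}(x) \;:=\; \sum_{T \in \cT(E) \,:\, \Flip_f(T) \in \Arb_r} \;\prod_{e \in T} x_e^{1-f_e}(1-x_e)^{f_e}$$
takes the same value at every root $r \in [n]$.

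To that end I would introduce an auxiliary weighted directed graph $H = H_{f,x}$ on vertex set $[n]$: for each $e \in E$ put the directed edge $\Flip_f(e)$ in $H$ with weight $w_e := x_e^{1-f_e}(1-x_e)^{f_e}$ (so $w_e = x_e$ when $f_e = 0$ and $w_e = 1-x_e$ when $f_e = 1$; if two edges of $E$ happen to flip to the same arc, add their weights). Because $\Flip_f$ preserves the underlying undirected structure of each edge, directed trees $T \in \cT(E)$ with $\Flip_f(T) \in \Arb_r$ correspond weight-for-weight to $r$-rooted arborescences in $H$, so the Matrix-Tree theorem (Theorem~\ref{thm:tutte}) identifies $Q_{f,r}(x) = \det(\mathcal{L}^{(r)})$, where $\mathcal{L}$ is the weighted Laplacian of $H$.

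The crucial step is to verify that $\mathcal{L}$ is zero-line-sum. Column sums of a Laplacian always vanish by definition, so it remains to show that the weights $\{w_e\}_{e \in E}$ form a circulation on $H$. A vertex-by-vertex tally at $v \in [n]$ shows that the net outflow in $H$ equals
$$\Big(\sum_{e \in \textsf{Out}(v)} x_e - \sum_{e \in \textsf{In}(v)} x_e\Big) - \Big(\sum_{e \in \textsf{Out}(v)} f_e - \sum_{e \in \textsf{In}(v)} f_e\Big),$$
since each edge $e$ with $f_e = 1$ is reversed in $H$ and carries the complementary weight $1-x_e$, contributing $f_e - x_e f_e$ in place of $x_e - x_e f_e$. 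Both bracketed expressions equal the demand $d_v$ because $x \in \cP$ and $f \in \cV(\cP)$ satisfy the same flow constraint at $v$, so the difference vanishes. Lemma~\ref{thm:zls} then gives $\det(\mathcal{L}^{(r)}) = \det(\mathcal{L}^{(s)})$ for all $r,s \in [n]$, and the lemma follows.

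The only real obstacle I anticipate is the bookkeeping in the second step: one must check that the exponents $x_e^{1-f_e}(1-x_e)^{f_e}$ appearing in \eqref{eq:factory} are exactly the arc weights needed on $H$ for the Matrix-Tree identity to match $Q_{f,r}$, and that when both $(u,v)$ and $(v,u)$ lie in $E$ the resulting parallel arcs in $H$ are handled consistently. Once that is in place, the reduction to a ZLS Laplacian and the circulation cancellation between $x$ and $f$ both fall out essentially by construction.
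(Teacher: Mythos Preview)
Your proposal is correct and follows essentially the same route as the paper. The paper packages the argument slightly differently---it defines the linear map $M_f(x)_e = x_e(1-f_e) + (1-x_{\bar e})f_{\bar e}$ and the arborescence polynomial $\SArb_r$, shows $M_f(x) \in \CircBar$ (your circulation check), and then applies Matrix-Tree plus the ZLS cofactor lemma to $\SArb_r(M_f(x))$---but your auxiliary graph $H$ is exactly the graph with arc weights $M_f(x)$, and your net-outflow computation $d_v - d_v = 0$ is the same cancellation the paper carries out in Lemma~\ref{lem:ptocirc}.
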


To prove the Root Independence Lemma, we will first define an auxiliary polynomial $\SArb_r$ over the affine span of the circulation polytope. For each root $r \in [n]$ we define  the function $\SArb_{r}: \CircBar \rightarrow \bbR$ as

$$\SArb_{r}(x) = \sum_{A \in \Arb_{r}} \prod_{e \in A} x_{e}.$$

Now, we show that as a consequence of the Matrix-Tree theorem, the value of $\SArb$  is independent of the choice of the root $r$:

\begin{lemma}\label{lem:sarb_root_independent}
Let $x \in \CircBar$, and $i, j \in [n]$. Then $\SArb_{i}(x) = \SArb_{j}(x)$.
\end{lemma}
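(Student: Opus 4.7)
The plan is to recognize $\SArb_r(x)$ as a principal cofactor of the weighted directed Laplacian and then invoke the Matrix-Tree Theorem together with Lemma~\ref{thm:zls} (equal cofactors for zero-line-sum matrices). Concretely, I would first extend $x \in \CircBar$ to a weight $[x_{i,j}]_{n \times n}$ on the complete directed graph on $[n]$ (setting $x_{i,j} = 0$ whenever the edge is not in the graph) and form the weighted Laplacian $\mathcal{L}$ as defined in Section~\ref{sec:matrix_tree}. The Matrix-Tree Theorem (Theorem~\ref{thm:tutte}) then gives $\det[\mathcal{L}^{(r)}] = \SArb_r(x)$ for every $r \in [n]$, so the claim reduces to showing that $\det[\mathcal{L}^{(r)}]$ is independent of $r$.

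Next, I would verify that $\mathcal{L}$ is a zero-line-sum matrix whenever $x \in \CircBar$. For any column index $j$, using $\mathcal{L}_{jj} = \sum_{k \neq j} x_{k,j}$ and $\mathcal{L}_{i,j} = -x_{i,j}$ for $i \neq j$, the column sum telescopes to $\sum_{k \neq j} x_{k,j} - \sum_{i \neq j} x_{i,j} = 0$, so column sums vanish unconditionally. For any row index $i$, the row sum equals $\sum_{k \neq i} x_{k,i} - \sum_{j \neq i} x_{i,j}$, which is precisely the net flow into $i$ minus the net flow out of $i$; this vanishes exactly because $x \in \CircBar$ (equation~\eqref{eq:circbar}). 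Hence both all row sums and all column sums of $\mathcal{L}$ equal $0$.

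With $\mathcal{L}$ identified as a ZLS matrix, Lemma~\ref{thm:zls} applies directly and yields $\det[\mathcal{L}^{(i)}] = \det[\mathcal{L}^{(j)}]$ for any $i, j \in [n]$. Combining with the Matrix-Tree identification gives
\[
\SArb_i(x) = \det[\mathcal{L}^{(i)}] = \det[\mathcal{L}^{(j)}] = \SArb_j(x),
\]
as desired. There is no real obstacle here: the only thing one needs to be careful about is correctly matching the sign and indexing conventions in the definition of $\mathcal{L}$ so that incoming edges contribute to the diagonal and the Matrix-Tree Theorem produces arborescences rooted at $r$ (rather than the reverse orientation). Once this bookkeeping is done correctly, the lemma is an immediate consequence of Theorem~\ref{thm:tutte} and Lemma~\ref{thm:zls}.
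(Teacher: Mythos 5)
Your proposal is correct and follows exactly the paper's argument: identify $\SArb_r(x)$ with the cofactor $\det[\mathcal{L}^{(r)}]$ via the Matrix-Tree Theorem, observe that $x \in \CircBar$ makes $\mathcal{L}$ zero-line-sum, and conclude by Lemma~\ref{thm:zls}. Your bookkeeping of which sums vanish unconditionally (columns) versus which require the circulation condition (rows) is in fact slightly more careful than the paper's own writeup.
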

\begin{proof}
Consider the $n$-by-$n$ matrix $M$ where $\cL_{ij} = -x_{ij}$ for $i \neq j$, and 
$$\cL_{ii} = \sum_{j \neq i} x_{ij} = \sum_{j \neq i} x_{ij}.$$
\noindent The second equality holds since $x \in \CircBar$. Note that $\cL$ is a  Laplacian matrix as define in Section \ref{sec:matrix_tree}. Also note that $\cL$ is a ZLS matrix since every row and column have zero sum. Let $\cL^{(i)}$ be the matrix obtained by removing the $i$-th row and $i$-th column from $\cL$. Note 
$$\SArb_i(x) = \det[\cL^{(i)}] = \det[\cL^{(j)}] = \SArb_j(x) $$
where the first and last equality follows from the Matrix-Tree Theorem (Theorem \ref{thm:tutte}) and the second equality follows from Theorem \ref{thm:zls} and the fact that $\cL$ is ZLS.
\end{proof}

Since $\SArb_{i}(x)$ is independent of $i$, we will omit $i$ and write $\SArb(x)$ to indicate the value of $\SArb_{i}(x)$ for any $i$. \\

The second ingredient for the proof of Lemma \ref{lemma:root-independence} will be a linear transformation mapping a flow-like polytope $\cP$ to $\CircBar$. Given a vertex $f$ of $\cP$ we will define the map $M_f:\bbR^E \rightarrow \bbR^E$ as follows:
$$M_{f}(x)_{e} = x_{e}(1 - f_{e}) + (1-x_{\overline{e}})f_{\overline{e}}.$$

The motivation for this particular map will become evident in the proof of Lemma \ref{lemma:root-independence}. Before we get there, we show the following lemma:

\begin{lemma} \label{lem:ptocirc}
Let $x$ be a point in $\cP$ and $f$ be a vertex of $\cP$. Then $M_{f}(x) \in \CircBar$.
\end{lemma}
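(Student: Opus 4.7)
The plan is a direct verification that $M_f(x)$ satisfies the defining equations of $\CircBar$, namely
\[
\sum_{i \neq v} y_{vi} - \sum_{i \neq v} y_{iv} = 0 \qquad \text{for every } v \in [n],
\]
evaluated at $y = M_f(x)$ (using the convention that $x_e = f_e = 0$ whenever $e \notin E$, so all sums may range over $i \neq v$). Fix any vertex $v$. Unpacking the definition of $M_f$ and using that the reverse of $(v,i)$ is $(i,v)$, each summand reads
\[
M_f(x)_{vi} = x_{vi}(1 - f_{vi}) + (1 - x_{iv}) f_{iv}, \qquad M_f(x)_{iv} = x_{iv}(1 - f_{iv}) + (1 - x_{vi}) f_{vi}.
\]

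Next, I would expand the difference $M_f(x)_{vi} - M_f(x)_{iv}$ term by term. The key cancellation is that the bilinear terms $x_{vi} f_{vi}$ and $x_{iv} f_{iv}$ each appear once with a plus and once with a minus sign and therefore drop out; what remains is the clean linear expression $(x_{vi} - x_{iv}) - (f_{vi} - f_{iv})$. Summing over $i \neq v$ then produces
\[
\Bigl(\sum_{i \neq v} x_{vi} - \sum_{i \neq v} x_{iv}\Bigr) - \Bigl(\sum_{i \neq v} f_{vi} - \sum_{i \neq v} f_{iv}\Bigr) = d_v - d_v = 0,
\]
where the last equality applies the defining flow equation \eqref{eq:affine} at $v$ to both $x \in \cP$ and $f \in \cP$. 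Since $v$ was arbitrary, $M_f(x) \in \CircBar$.

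There is no real obstacle here. The map $M_f$ has been engineered precisely so that the bilinear cross terms cancel pointwise in the net-out-flow calculation, and the two remaining linear divergences then match because $x$ and $f$ carry the same demand vector $d$. In fact the argument generalizes immediately: it shows that for any $x, y$ with equal net flow at every vertex we have $M_f(x) - M_f(y) \in \CircBar$, so in particular $M_f$ sends the affine span $\overline{\cP}$ into $\CircBar$; the lemma is just the restriction of this observation to $x \in \cP$. The only bookkeeping care needed is to keep $e$ and $\bar e$ straight in the two sums, since $x_{\bar e}$ and $f_{\bar e}$ appear with swapped indices in the contributions of $M_f(x)_{vi}$ versus $M_f(x)_{iv}$.
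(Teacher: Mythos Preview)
Your proof is correct and follows essentially the same approach as the paper: both expand $M_f(x)_{vi}$ and $M_f(x)_{iv}$, observe that the bilinear terms cancel so that the difference reduces to $(x_{vi}-x_{iv})-(f_{vi}-f_{iv})$, and then sum over $i$ to obtain $d_v - d_v = 0$ from the flow constraints satisfied by both $x$ and $f$. The only cosmetic difference is that you simplify each summand before summing, whereas the paper sums first and then simplifies.
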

\begin{proof}
We must show (for each $v \in [n]$) that

$$\sum_{i}M_{f}(x)_{vi} - \sum_{i}M_{f}(x)_{iv} = 0.$$

Expanding $M_{f}(x)$, this is equivalent to showing 

$$\sum_{i}\left(x_{vi}(1 - f_{vi}) + (1-x_{iv})f_{iv}\right) - \sum_{i}\left(x_{iv}(1 - f_{iv}) + (1-x_{vi})f_{vi}\right) = 0.$$

This is true since we can further simplify the LHS of the above expression as follows:

\begin{eqnarray*}
& & \sum_{i}\left(x_{vi}(1 - f_{vi}) + (1-x_{iv})f_{iv}\right) - \sum_{i}\left(x_{iv}(1 - f_{iv}) + (1-x_{vi})f_{vi}\right) \\
&=& \sum_{i}\left(x_{vi}(1 - f_{vi}) - (1-x_{vi})f_{vi}\right) - \sum_{i}\left(x_{iv}(1 - f_{iv}) - (1-x_{iv})f_{iv} \right) \\
&=& \sum_{i}\left(x_{vi} - f_{vi}\right) - \sum_{i}\left(x_{iv} - f_{iv} \right) \\
&=& \sum_{i}\left(x_{vi} - x_{iv}\right) - \sum_{i}\left(f_{vi} - f_{iv} \right) \\
&=& d_v - d_v = 0.
\end{eqnarray*}
\end{proof}

We are now ready to prove the main lemma:






\begin{proof}[Proof of Lemma \ref{lemma:root-independence}]

To do this, instead of expressing $P_{f,r}(x)$ as a sum over directed trees, we will find it more convenient to sum over all arborescences in $\Arb_{r}$. For each arborescence $A \in \Arb_{r}$ and vertex $f$ of $\cP$, consider the set $\Flip^{-1}_{f}(A)$ of directed trees which get sent to $A$ under $\Flip_{f}$. By doing this, we can rewrite \eqref{eq:factory} in the form:

$$P_{f}(x) = \left(\prod_{e \in E} x_e^{f_{e}} (1-x_{e})^{1-f_{e}}\right) \cdot \left(\sum_{A \in \Arb_{r}}\sum_{T \in \Flip^{-1}_{f}(A)} \left(\prod_{e\in T}x_{e}^{1-f_{e}}(1-x_{e})^{f_e} \right)\right).$$

Let us now consider the structure of $\Flip^{-1}_{f}(A)$. Since $\Flip_{f}(T)$ acts independently on each edge of $T$, $\Flip^{-1}_{f}(A)$ is the Cartesian product of $\Flip^{-1}_{f}(e)$ for all $e \in A$. In particular, we can further rewrite:

$$P_{f,r}(x) = \left(\prod_{e \in E} x_e^{f_{e}} (1-x_{e})^{1-f_{e}}\right) \cdot \left(\sum_{A \in \Arb_{r}}\prod_{e \in A}\left(\sum_{e' \in \Flip^{-1}_{f}(A)} x_{e'}^{1-f_{e'}}(1-x_{e'})^{f_{e'}} \right) \right).$$

We now categorize the possibilities for $\Flip_{f}^{-1}(e)$:

\begin{itemize}
    \item If $f_{e} = 0$ and $f_{\overline{e}} = 0$, then $\Flip_{f}^{-1}(e) = \{e\}$.
    \item If $f_{e} = 0$ and $f_{\overline{e}} = 1$, then $\Flip_{f}^{-1}(e) = \{e, \overline{e}\}$.
    \item If $f_{e} = 1$ and $f_{\overline{e}} = 0$, then $\Flip_{f}^{-1}(e) = \emptyset$.
    \item If $f_{e} = 1$ and $f_{\overline{e}} = 1$, then $\Flip_{f}^{-1}(e) = \{\overline{e}\}$.
\end{itemize}

A direct consequence of this categorization is that the following identity holds:

\begin{equation}\label{eq:invident}
\sum_{e' \in \Flip^{-1}_{f}(e)} x_{e'}^{1-f_{e'}}(1-x_{e'})^{f_{e'}} = x_{e}(1 - f_{e}) + (1-x_{\overline{e}})f_{\overline{e}}.
\end{equation}

The RHS is exactly the $e$-th component of $M_f(x)$, which allows us to re-write:

$$P_{f,r}(x) = \left(\prod_{e \in E} x_e^{f_{e}} (1-x_{e})^{1-f_{e}}\right) \cdot \left(\sum_{A \in \Arb_{r}}\prod_{e \in A} M_f(x)_e \right) =  \left(\prod_{e \in E} x_e^{f_{e}} (1-x_{e})^{1-f_{e}}\right) \cdot \SArb_r(M_f(x)).$$

By Lemma \ref{lem:ptocirc}, we know that $M_f(x) \in \CircBar$ and by Lemma \ref{lem:sarb_root_independent} we know that $\SArb$ is independent of the choice of the root for points in $\CircBar$. Taken together, those facts conclude the proof of the lemma.

\end{proof}

\subsection{Condition \eqref{eq:cbf}}

Now that we have established that the value of $P_{f,r}(x)$ for $x \in \cP$ doesn't depend on the choice of root $r$, we will drop $r$ from our notation and refer to these polynomials simply as $P_f(x)$. We will now establish condition \eqref{eq:cbf} in Theorem \ref{thm:cbf_cond}. This condition is a vector equality, so it is enough to show that each component is zero. 
We will fix a specific edge $\eta = (s,t) \in E$, and let $\eta = (s, t)$ and show that the $\eta$th component of the above sum in condition \eqref{eq:cbf} is $0$. Splitting this sum into two parts depending on the value of $f_{\eta}$, note that it suffices to show that

\begin{equation}\label{eq:fact_identity}
\sum_{f \in V; f_{\eta} = 1} (1 - x_{\eta}) P_{f}(x) = \sum_{f \in V;  f_{\eta} = 0} x_{\eta}P_{f}(x).
\end{equation}

\noindent Using Lemma \ref{lemma:root-independence}, we will write the left-hand side of \eqref{eq:fact_identity} as (rooting our arborescences at $t$)

$$\sum_{f_{\eta} = 1} (1 - x_{\eta}) \left(\prod_{e \in E} x_e^{f_{e}} (1-x_{e})^{1-f_{e}}\right) \cdot \left(\sum_{T | \Flip_{f}(T) \in \Arb_{t}} \left(\prod_{e\in T}x_{e}^{1-f_{e}}(1-x_{e})^{f_e} \right)\right).$$

\noindent
and the right-hand side of \eqref{eq:fact_identity} as (rooting our arborescences at $s$)

$$\sum_{f_{\eta} = 0} x_{\eta} \left(\prod_{e \in E} x_e^{f_{e}} (1-x_{e})^{1-f_{e}}\right) \cdot \left(\sum_{T | \Flip_{f}(T) \in \Arb_{s}} \left(\prod_{e\in T}x_{e}^{1-f_{e}}(1-x_{e})^{f_e} \right)\right).$$

Let us focus on the left-hand side momentarily. Note that if $f_{\eta} = 1$ and $\Flip_{f}(T) \in \Arb_{t}$, then it is impossible for $\eta$ to belong to $T$ (if $\eta \in T$, then $\Flip_{f}(T)$ would contain $\overline{\eta} = (t, s)$, and no arborescence rooted towards $t$ can contain the outgoing edge $(t, s)$). With this observation, we can rewrite the left-hand side as:

\begin{equation}\label{eq:lhs}\sum_{f_{\eta} = 1}\sum_{T | \Flip_{f}(T) \in \Arb_{t}} \left(\prod_{e \in E \setminus (T \cup \{\eta\})} x_e^{f_{e}} (1-x_{e})^{1-f_{e}}\right) \cdot \left(\prod_{e\in T \cup \{\eta\}}x_{e}(1-x_{e}) \right).\end{equation}

By similar logic (if $f_{\eta} = 0$ and $\Flip_{f}(T) \in \Arb_{s}$, then $\eta$ cannot belong to $T$), we can also rewrite the right-hand side as:

\begin{equation}\label{eq:rhs}\sum_{f_{\eta} = 0}\sum_{T | \Flip_{f}(T) \in \Arb_{s}} \left(\prod_{e \in E \setminus (T \cup \{\eta\})} x_e^{f_{e}} (1-x_{e})^{1-f_{e}}\right) \cdot \left(\prod_{e\in T \cup \{\eta\}}x_{e}(1-x_{e}) \right).\end{equation}

We will exchange the order of summation in the above two equations. For a directed tree $T \in \cT(E)$, let $\cF_{s}(T) \subseteq \cV(\cP)$ be the set of vertices $f$ of $\cP$ such that $\Flip_{f}(T) \in \Arb_{s}$ and $f_{\eta} = 0$. Similarly, let $\cF_{t}(T) \subseteq \cV(\cP)$ be the set of vertices $f$ of $\cP$ such that $\Flip_{f}(T) \in \Arb_{t}$ and $f_{\eta} = 1$. We can now write expression \eqref{eq:lhs} as

\begin{equation}\label{eq:lhs2}\sum_{T \in \cT(E)} \sum_{f \in \cF_{t}(T)} \left(\prod_{e \in E \setminus (T \cup \{\eta\})} x_e^{f_{e}} (1-x_{e})^{1-f_{e}}\right) \cdot \left(\prod_{e\in T \cup \{\eta\}}x_{e}(1-x_{e}) \right).\end{equation}

\noindent
and expression \eqref{eq:rhs} as

\begin{equation}\label{eq:rhs2}\sum_{T \in \cT(E)} \sum_{f \in \cF_{s}(T)} \left(\prod_{e \in E \setminus (T \cup \{\eta\})} x_e^{f_{e}} (1-x_{e})^{1-f_{e}}\right) \cdot \left(\prod_{e\in T \cup \{\eta\}}x_{e}(1-x_{e}) \right).\end{equation}

We will now show there exists a bijection between the monomials appearing in \eqref{eq:lhs2} and \eqref{eq:rhs2} and therefore
$$\sum_{f \in \cF_{s}(T)} \prod_{e \in E \setminus (T \cup \{\eta\})} x_e^{f_{e}} (1-x_{e})^{1-f_{e}} = \sum_{f \in \cF_{t}(T)} \prod_{e \in E \setminus (T \cup \{\eta\})} x_e^{f_{e}} (1-x_{e})^{1-f_{e}}.$$
concluding the proof. This bijection is established in the following lemma:

\begin{lemma}\label{lem:bijection}
There exists a bijection $\pi$ between $\cF_{s}(T)$ and $\cF_{t}(T)$ with the following property: if $\pi(f^{(s)}) = f^{(t)}$, then $f^{(s)}_{e} = f^{(t)}_{e}$ for all $e \in E \setminus (T \cup \{\eta\})$.
\end{lemma}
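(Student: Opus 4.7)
The plan is to define $\pi$ explicitly as a cycle-toggle. First, if $\eta \in T$ then both $\cF_s(T)$ and $\cF_t(T)$ are empty: any $f \in \cF_s(T)$ would require $\eta \in \Flip_f(T) \in \Arb_s$, but an arborescence rooted at $s$ cannot contain the outgoing edge $\eta = (s,t)$, and symmetrically for $t$. Henceforth assume $\eta \notin T$. Since $T$ is a spanning tree of $[n]$, $T \cup \{\eta\}$ contains a unique undirected cycle with edge set $C$, namely $\eta$ together with the edges on the unique $T$-path between $s$ and $t$. Define $\pi \colon \{0,1\}^E \to \{0,1\}^E$ by $\pi(f)_e = 1 - f_e$ for $e \in C$ and $\pi(f)_e = f_e$ otherwise. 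Then $\pi$ is an involution that depends only on $T$ and $\eta$, and since $C \subseteq T \cup \{\eta\}$ it trivially preserves $f_e$ for every $e \in E \setminus (T \cup \{\eta\})$. It therefore suffices to verify $\pi(\cF_s(T)) \subseteq \cF_t(T)$; the reverse inclusion follows by a symmetric argument, at which point $\pi$ restricts to the desired bijection.

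Fix $f \in \cF_s(T)$ and set $g = \pi(f)$. I would check three things. (i) $g \in \cV(\cP)$: in $\Flip_f(T) \in \Arb_s$ the walk from $t$ to $s$ traverses exactly the edges of $C \setminus \{\eta\}$, each oriented toward $s$, so together with $\eta = (s,t)$ they form a directed cycle in the flipped graph. Each $e \in C \setminus \{\eta\}$ appears in this cycle in its own direction when $f_e = 0$ and reversed when $f_e = 1$, and $\eta$ appears in its own direction since $f_\eta = 0$. Consequently, toggling $f$ on $C$ is the same as adding this coherent directed circulation to $f$, which keeps us in $\{0,1\}^E$ and preserves all flow-conservation equations. (ii) $g_\eta = 1 - f_\eta = 1$. (iii) $\Flip_g(T)$ equals $\Flip_f(T)$ with the edges on the $t$-to-$s$ path reversed and every other tree edge left unchanged. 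A direct check shows this is an arborescence rooted at $t$: every vertex's walk to $s$ in $\Flip_f(T)$ eventually joins this spine and continues along it, so reversing only the spine redirects every such walk to $t$.

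The main obstacle is the flow-preservation check (i): an arbitrary cycle toggle on a $\{0,1\}$-flow need not produce a flow, because the pattern of $0$'s and $1$'s along $C$ must be consistent with one of the two cyclic orientations of $C$. The hypothesis $\Flip_f(T) \in \Arb_s$ together with $f_\eta = 0$ is exactly what makes this work: it forces the edges of $C$ to split cleanly into those traversed forward by the flipped cycle (with $f_e = 0$) and those traversed backward (with $f_e = 1$), so the toggle adds a bona fide directed cycle to $f$. Once (i) is in hand, conditions (ii) and (iii) are essentially mechanical and the bijection is complete.
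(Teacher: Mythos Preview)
Your proof is correct and is essentially the same construction as the paper's. Both define $\pi$ by altering $f$ only on the fundamental cycle $C$ of $\eta$ with respect to $T$: the paper writes this as $\pi(f)=f+g$ for a fixed $g\in\CircBar$ supported on $C$ (with $g_e=\pm 1$ according to whether the tree edge $e$ agrees with $A_s$ or $A_t$), and since membership in $\cF_s(T)$ forces $f_e$ to a specific value on every edge of $C$, this $f\mapsto f+g$ is exactly your toggle $f_e\mapsto 1-f_e$ on $C$.
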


\begin{proof}
We will show that there is an element $g \in \CircBar$ such that $\pi(f) = f + g$ is such a bijection with the above property.

First, note that given any directed tree $T \in \cT$ and a vertex $v \in [n]$, there is a unique way to reorient the edges of $T$ so that it lies in $\Arb_{v}$ (in particular, all edges must be oriented towards $v$). Therefore there exists a unique arborescence $A_{s} \in \Arb_{s}$ such that if $\Flip_{f}(T) \in \Arb_{s}$ (for any $f$), then $\Flip_{f}(T) = A_{s}$. Similarly, there exists a unique arborescence $A_{t} \in \Arb_{t}$ such that if $\Flip_{f}(T) \in \Arb_{t}$, then $\Flip_{f}(T) = A_t$.

We now construct $g$ as follows. Set $g_{\eta} = 1$. Now, let $C^{+} = T \cap (A_{s} \setminus A_{t})$ be the set of edges in $T$ that belong to $A_{s}$ but not to $A_{t}$. Similarly, let $C^{-} = T \cap (A_{t} \setminus A_{s})$ be the set of edges in $T$ that belong to $A_{t}$ but not to $A_{s}$. For each edge $e \in C^{+}$, set $g_{e} = 1$, and for each edge $e \in C^{-}$, set $g_{e} = -1$. See Figure \ref{fig:proof} for an example.

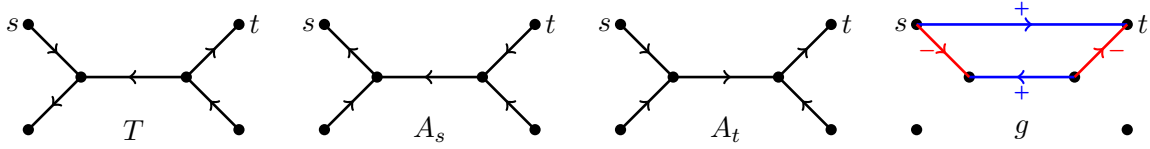
\begin{figure}[h]
\centering
\begin{tikzpicture}[scale=.7,decoration={
    markings,
    mark=at position 0.55 with {\arrow{>}}}
    ]

\node[circle,fill,inner sep=1.5pt] at (0,0) {};
\node[circle,fill,inner sep=1.5pt] at (0,2) {};
\node[circle,fill,inner sep=1.5pt] at (1,1) {};
\node[circle,fill,inner sep=1.5pt] at (3,1) {};
\node[circle,fill,inner sep=1.5pt] at (4,0) {};
\node[circle,fill,inner sep=1.5pt] at (4,2) {};
\draw[postaction={decorate},line width=1pt] (0,2) -- (1,1);
\draw[postaction={decorate},line width=1pt] (1,1) -- (0,0);
\draw[postaction={decorate},line width=1pt] (3,1) -- (1,1);
\draw[postaction={decorate},line width=1pt] (4,0) -- (3,1);
\draw[postaction={decorate},line width=1pt] (3,1) -- (4,2);
\node at (2,0) {$T$};
\node at (-.3,2) {$s$};
\node at (4.3,2) {$t$};

\begin{scope}[xshift=160]
\node[circle,fill,inner sep=1.5pt] at (0,0) {};
\node[circle,fill,inner sep=1.5pt] at (0,2) {};
\node[circle,fill,inner sep=1.5pt] at (1,1) {};
\node[circle,fill,inner sep=1.5pt] at (3,1) {};
\node[circle,fill,inner sep=1.5pt] at (4,0) {};
\node[circle,fill,inner sep=1.5pt] at (4,2) {};
\draw[postaction={decorate},line width=1pt] (1,1) -- (0,2);
\draw[postaction={decorate},line width=1pt] (0,0) -- (1,1);
\draw[postaction={decorate},line width=1pt] (3,1) -- (1,1);
\draw[postaction={decorate},line width=1pt] (4,0) -- (3,1);
\draw[postaction={decorate},line width=1pt] (4,2) -- (3,1);
\node at (2,0) {$A_s$};
\node at (-.3,2) {$s$};
\node at (4.3,2) {$t$};
\end{scope}

\begin{scope}[xshift=320]
\node[circle,fill,inner sep=1.5pt] at (0,0) {};
\node[circle,fill,inner sep=1.5pt] at (0,2) {};
\node[circle,fill,inner sep=1.5pt] at (1,1) {};
\node[circle,fill,inner sep=1.5pt] at (3,1) {};
\node[circle,fill,inner sep=1.5pt] at (4,0) {};
\node[circle,fill,inner sep=1.5pt] at (4,2) {};
\draw[postaction={decorate},line width=1pt] (0,2) -- (1,1);
\draw[postaction={decorate},line width=1pt] (0,0) -- (1,1);
\draw[postaction={decorate},line width=1pt] (1,1) -- (3,1);
\draw[postaction={decorate},line width=1pt] (4,0) -- (3,1);
\draw[postaction={decorate},line width=1pt] (3,1) -- (4,2);
\node at (2,0) {$A_t$};
\node at (-.3,2) {$s$};
\node at (4.3,2) {$t$};
\end{scope}

\begin{scope}[xshift=480]
\node[circle,fill,inner sep=1.5pt] at (0,0) {};
\node[circle,fill,inner sep=1.5pt] at (0,2) {};
\node[circle,fill,inner sep=1.5pt] at (1,1) {};
\node[circle,fill,inner sep=1.5pt] at (3,1) {};
\node[circle,fill,inner sep=1.5pt] at (4,0) {};
\node[circle,fill,inner sep=1.5pt] at (4,2) {};
\draw[blue, postaction={decorate},line width=1pt] (0,2) -- (4,2);
\node at (2,2.3) {{\color{blue}$+$}};
\draw[blue, postaction={decorate},line width=1pt] (3,1) -- (1,1);
\node at (2,0.7) {{\color{blue} $+$}};

\draw[red, postaction={decorate},line width=1pt] (0,2) -- (1,1);
\draw[red, postaction={decorate},line width=1pt] (3,1) -- (4,2);
\node at (0.2,1.5) {{\color{red} $-$}};
\node at (3.8,1.5) {{\color{red} $-$}};
\node at (2,0) {$g$};
\node at (-.3,2) {$s$};
\node at (4.3,2) {$t$};
\end{scope}

\end{tikzpicture}
\caption{A directed tree $T$ and its corresponding arborescences $A_s$ and $A_t$ that can be obtained by flipping edges so that all edges are oriented towards one of them. From $T$, $A_s$ and $A_t$ we can construct $g$ which is a collection of edges, some with $+1$ weight (in blue) and some with $-1$ weight (in red). }
\label{fig:proof}
\end{figure}

\begin{figure}[h]
\centering
\begin{tikzpicture}[scale=.7,decoration={
    markings,
    mark=at position 0.55 with {\arrow{>}}}
    ]

\node[circle,fill,inner sep=1.5pt] at (0,0) {};
\node[circle,fill,inner sep=1.5pt] at (0,2) {};
\node[circle,fill,inner sep=1.5pt] at (1,1) {};
\node[circle,fill,inner sep=1.5pt] at (3,1) {};
\node[circle,fill,inner sep=1.5pt] at (4,0) {};
\node[circle,fill,inner sep=1.5pt] at (4,2) {};
\draw[postaction={decorate},line width=1pt] (0,2) -- (1,1);
\draw[postaction={decorate},line width=1pt] (1,1) -- (0,0);
\draw[postaction={decorate},line width=1pt] (0,0) -- (3,1);
\draw[postaction={decorate},line width=1pt] (3,1) -- (4,2);
\draw[postaction={decorate},line width=1pt] (4,2) -- (0,2);
\node at (2,0) {$f$};
\node at (-.3,2) {$s$};
\node at (4.3,2) {$t$};

\begin{scope}[xshift=200]
\node[circle,fill,inner sep=1.5pt] at (0,0) {};
\node[circle,fill,inner sep=1.5pt] at (0,2) {};
\node[circle,fill,inner sep=1.5pt] at (1,1) {};
\node[circle,fill,inner sep=1.5pt] at (3,1) {};
\node[circle,fill,inner sep=1.5pt] at (4,0) {};
\node[circle,fill,inner sep=1.5pt] at (4,2) {};
\draw[postaction={decorate},line width=1pt] (1,1) -- (0,0);
\draw[postaction={decorate},line width=1pt] (0,0) -- (3,1);
\draw[postaction={decorate},line width=1pt] (4,2) -- (0,2);
\draw[postaction={decorate},line width=1pt] (3,1) -- (1,1);
\draw[postaction={decorate},line width=1pt] (0,2) to [out=30,in=150] (4,2);
\node at (2,0) {$f+g$};
\node at (-.3,2) {$s$};
\node at (4.3,2) {$t$};
\end{scope}

\end{tikzpicture}
\caption{For $\cP = \Circ$, we have a flow $f$ such that $\Flip_f(T) = A_s$ for $T$ and $A_s$ as in Figure \ref{fig:proof}. When we sum $f+g$ we obtain a flow such that $\Flip_{f+g}(T)=A_t$.}
\label{fig:proof2}
\end{figure}
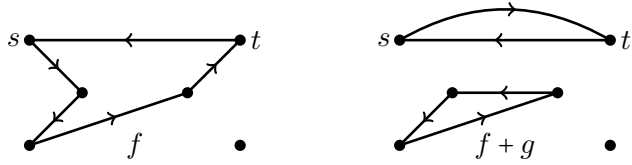

\paragraph{Showing $f+g \in \{0, 1\}^{E}$.} To prove that such a $g$ works, fix an $f \in \cF_{s}(T)$ and let $f' = \pi(f) = f + g$. We will first show that $f' \in \{0, 1\}^E$. Since $f \in \{0, 1\}^E$, it suffices to show that for all edges $e$ where $g_{e} = 1$, that $f_{e} = 0$, and for all edges $e$ where $g_{e} = -1$, that $f_{e} = -1$.

Edges $e \in C^{+}$ belong to both $T$ and $A_{s}$. If $f_{e} = 1$, then $e$ would be flipped in (and not belong to) $\Flip_{f}(T)$, so there is no way this could equal $A_{s}$ (contradicting the fact that $f \in \cF_{s}(T)$). It follows that $f_{e} = 0$ for $e \in C^{+}$ (where $g_{e} = 1$). Similarly, edges $e \in C^{-}$ belong to $T$ but do not belong to $A_{s}$. If $f_{e} = 0$, then $e$ would not be flipped in (and thus belong to) $\Flip_{f}(T)$, so again there is no way this could equal $A_{s}$. It follows that $f_{e} = 1$ for $e \in C^{-}$ (where $g_{e} = -1$). Finally, by the definition of $\cF_{s}(T)$, $f_{\eta} = 0$ (so $g_{\eta} = 1$ is fine). 

\paragraph{Showing $f+g \in \cP$.} We have now shown $f' \in \{0, 1\}^{E}$. To show that $f'$ is in fact a vertex of $\cP$, it suffices to show that $g \in \CircBar$ (and therefore that $f+g$ lies on the hyperplane \eqref{eq:affine} defining $\cP$). To show that $g \in \CircBar$, we will write $g$ as a linear combination of indicators of directed cycles. Let $C = \eta \cup C^{+} \cup \overline{C^{-}}$ (where $\overline{C^{-}} = \{\overline{e}; e\in C^{-}\}$) -- we claim $C$ is a directed cycle. In particular, we claim that $C$ is the directed cycle formed by $\eta$ and the directed path from $t$ to $s$ along the edges in $T$ (when viewed as an undirected tree). 

To see this, note that $A_{s} \setminus A_{t}$ consists exactly of the directed path from $t$ to $s$, and $A_{t} \setminus A_{s}$ consists exactly of the directed path from $s$ to $t$. $C^{+}$ therefore contains the edges in $T$ that are oriented towards $s$ on the directed path from $t$ to $s$, and $C^{-}$ contains the edges in $T$ that oriented towards $t$ on the directed path from $s$ to $t$ (and when reversed, form the missing edges on the directed path from $t$ to $s$).

Now, let $c \in \CircBar$ be the indicator of the directed cycle $C$, and given an edge $e$, let $\nu(e)$ be the indicator of the two edge directed cycle $\{e, \overline{e}\}$. We then have that

$$g = c - \sum_{e \in C^{-}} \nu(e),$$

\noindent
and therefore $g \in \CircBar$.

\paragraph{Showing $f+g \in \cF_{t}(T)$.} We next argue that $\Flip_{f'}(T) = A_{t}$. Note that for all edges $e \in A_{s} \cap A_{t}$, $f'_{e} = f_{e}$, so if $\Flip_{f}(T)$ contains $e$, then $\Flip_{f'}(T)$ also contains $e$. On the other hand, for each edge $e \in A_{s} \triangle A_{t}$, either $e \in T$ or $\overline{e} \in T$. If $e \in T$, then (since $g_{e} \neq 0$) $f_{e} \neq f'_{e}$, so $\Flip_{f}(T)$ and $\Flip_{f'}(T)$ will contain different orientations of $e$. Since $\Flip_{f}(T)$ contains the orientation of $e$ belonging to $A_{s}$, $\Flip_{f'}(T)$ contains the orientation of $e$ belonging to $A_{t}$. The same argument works if $\overline{e} \in T$, replacing $e$ with $\overline{e}$ in the preceding sentence.

Since $f' \in \cV(\cP)$, $\Flip_{f'}(T) = A_{t}$, and $f'_{\eta} = 1$, it follows that $f' \in \cF_{t}(T)$. We can see that $\pi$ is a bijection with inverse given by $\pi^{-1}(f') = f' - g$ (a symmetric argument to that given above shows that this maps any $f' \in \cF_{t}(T)$ to an element of $\cF_{s}(T)$). Finally, note that since the support of $g$ is contained in $T \cup \{\eta\}$, $f$ and $f'$ agree on all edges not in $T \cup \{\eta\}$, so the additional restriction on the bijection is true. 
\end{proof}




\subsection{Condition \eqref{eq:fnz}}

One issue that could occur with the above method is that we may obtain a collection of polynomials $P_f(x)$ which satisfy \eqref{eq:cbf} but all equal $0$. This is prevented by condition \eqref{eq:fnz} in Theorem \ref{thm:cbf_cond} which requires the polynomials to be a non-trivial solution.


\begin{lemma}\label{thm:positive_factory}
Let $\cP$ be a flow-based polytope where the set of variable edges $E$ is connected and $\cP \cap (0, 1)^n \neq \emptyset$. Then there exists a flow $f \in \cV(\cP)$ and a directed tree $T \in \cT(E)$ such that $\Flip_{f}(T)$ is an arborescence (rooted at some node $r \in [n]$). 
\end{lemma}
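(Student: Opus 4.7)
The plan is to prove a slightly stronger statement: for \emph{every} vertex $f \in \cV(\cP)$, one can find $T \in \cT(E)$ making $\Flip_{f}(T)$ an arborescence. Since $\cP \subseteq [0,1]^E$ is non-empty and bounded, $\cV(\cP) \neq \emptyset$, so this implies the lemma. Fix any such $f$ and consider the auxiliary directed graph $H = \{\Flip_f(e) : e \in E\} \subseteq E_0$; explicitly, $(u,v) \in H$ iff either $(u,v) \in E$ with $f_{(u,v)} = 0$, or $(v,u) \in E$ with $f_{(v,u)} = 1$. If $H$ is strongly connected on $[n]$, then for any chosen root $r$ it contains a spanning arborescence $A$ rooted at $r$; each edge $(u,v) \in A$ admits at least one preimage in $E$ under $\Flip_f$, and choosing one such preimage per edge of $A$ yields a set $T \in \cT(E)$ (with the same underlying undirected edges as $A$, hence a tree) such that $\Flip_f(T) = A$, as desired.

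The main obstacle is to establish strong connectivity of $H$, and this is where the hypothesis $\cP \cap (0,1)^n \neq \emptyset$ enters. Fix any non-trivial cut $\emptyset \subsetneq S \subsetneq [n]$ and partition the $E$-edges crossing $S$ by direction and $f$-value: let $A_\alpha$ ($\alpha \in \{0,1\}$) be the set of edges $e \in E$ oriented from $S$ to $[n] \setminus S$ with $f_e = \alpha$, and let $B_\alpha$ be the analogous sets for the reverse orientation; write $A = A_0 \cup A_1$, $B = B_0 \cup B_1$. Unwinding the definition of $H$ shows that the number of $H$-edges from $S$ to $[n] \setminus S$ is exactly $|A_0| + |B_1|$, so it suffices to show this quantity is at least $1$ for every non-trivial $S$.

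Summing the flow constraint $\sum_i y_{vi} - \sum_i y_{iv} = d_v$ over $v \in S$, once with $y = f$ and once with $y = x$ for an interior point $x \in \cP \cap (0,1)^n$, gives
$$|A_1| - |B_1| \;=\; \sum_{v \in S} d_v \;=\; \sum_{e \in A} x_e - \sum_{e \in B} x_e.$$
Now suppose for contradiction that $|A_0| + |B_1| = 0$. The left side of this identity collapses to $|A|$, whereas since each $x_e \in (0,1)$ the right side is strictly less than $|A|$ whenever $|A| \geq 1$; and if $|A| = 0$ the identity forces $\sum_{e \in B} x_e = 0$, hence $|B| = 0$ as well, contradicting the undirected connectedness of $E$ across the nontrivial cut $S$. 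A symmetric argument (using that $\sum_{e \in B} x_e < |B|$ when $|B| \geq 1$) shows $|A_1| + |B_0| \geq 1$. Thus $H$ is strongly connected, and the lifting step above completes the proof.
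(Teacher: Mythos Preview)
Your proof is correct and, like the paper's, establishes the stronger fact that for \emph{every} vertex $f$ the auxiliary graph $H=\Flip_f(E)$ is strongly connected (and hence carries an arborescence that lifts to a $T\in\cT(E)$). The route you take to strong connectivity, however, is genuinely different. The paper fixes $f$, then for each edge of $E'=\Flip_f(E)$ manufactures a circulation in $E'$ containing it by taking differences $f'-f$ of vertices of $\cP$ and massaging the signs; it then invokes the two auxiliary lemmas (``every edge lies in a circulation $+$ undirected connectedness $\Rightarrow$ strongly connected'' and ``strongly connected $\Rightarrow$ arborescence at every root''). Your argument bypasses the circulation construction entirely: you run a direct cut argument, comparing the integral flow balance of $f$ across a cut $S$ with the fractional balance of an interior point $x$, and read off $|A_0|+|B_1|\geq 1$ from the strict inequalities $0<x_e<1$. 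This is shorter and more elementary; the paper's approach, on the other hand, makes the connection to circulations explicit, which fits the surrounding narrative (and reuses Lemma~\ref{lem:pos2}). One small imprecision: when you say the number of $H$-edges from $S$ to $[n]\setminus S$ is ``exactly $|A_0|+|B_1|$'', this counts with multiplicity (an edge $e\in A_0$ and its reverse in $B_1$ both map to the same $H$-edge); but since you only need the count to be positive, this does not affect the argument.
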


Note that by the construction of $P_{f}(x)$ in \eqref{eq:factory}, Lemma \ref{thm:positive_factory} implies $P_{f}(x) > 0$ for any $x \in \cP \cap (0, 1)^n$, and therefore that a Bernoulli race run over the $P_{f}(x)$ will terminate in finite time. An interesting consequence of Equation \eqref{eq:factory} and the fact that it is invariant to the choice of root is that it does not matter which vertex the arborescence $\Flip_{f}(T)$ is rooted at (that is, for every root $r$, there exists an arborescence rooted at $r$ of the form $\Flip_{f}(T)$).

To prove Lemma \ref{thm:positive_factory}, we will need to make use of the following two lemmas. The first characterizes directed graphs where there exists an arborescence rooted at every node.

\begin{lemma}\label{lem:pos1}
A directed graph has an arborescence rooted at every node iff it is strongly connected.
\end{lemma}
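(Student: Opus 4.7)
The plan is to handle the two directions separately, with the forward direction being essentially immediate and the backward direction relying on a reverse-BFS construction.

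For the forward direction, suppose $G$ has an arborescence rooted at every node. To show $G$ is strongly connected, pick any two vertices $u, v \in [n]$. By assumption, there is an arborescence $A_v$ rooted at $v$, which by definition contains a directed walk from $u$ to $v$ using edges of $A_v \subseteq G$. Hence $G$ contains a directed path from $u$ to $v$, and strong connectivity follows.

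For the backward direction, fix any root $r \in [n]$; the goal is to produce an arborescence rooted at $r$. The key idea is to work in the reverse graph $G^{\mathrm{rev}}$ (the graph obtained by flipping the orientation of every edge of $G$). Since $G$ is strongly connected, so is $G^{\mathrm{rev}}$; in particular, every vertex of $G^{\mathrm{rev}}$ is reachable from $r$ via a directed walk. I would then run breadth-first search (or equivalently depth-first search) from $r$ in $G^{\mathrm{rev}}$, and collect the $n-1$ edges used to discover each non-root vertex for the first time. This yields a spanning out-tree in $G^{\mathrm{rev}}$ rooted at $r$ (every non-root vertex $v$ has exactly one parent edge pointing from its BFS-predecessor to $v$, and following parent pointers from $v$ back to $r$ gives a unique walk).

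Now re-interpret those $n-1$ edges in the original graph $G$: each edge $(u,v)$ of the out-tree in $G^{\mathrm{rev}}$ corresponds to the edge $(v,u)$ in $G$. The resulting collection $A$ is a set of $n-1$ directed edges in $G$, and the unique walk $r \to v_1 \to v_2 \to \cdots \to v$ in $G^{\mathrm{rev}}$ becomes the unique walk $v \to \cdots \to v_2 \to v_1 \to r$ in $G$. Thus for every vertex $v$ there is a unique walk in $A$ from $v$ to $r$, which is exactly the definition of an arborescence rooted at $r$. I expect no real obstacle here; the only mild subtlety is to be careful about orientation conventions when translating between $G$ and $G^{\mathrm{rev}}$, and to verify that reachability from $r$ in $G^{\mathrm{rev}}$ does follow from strong connectivity of $G$ (which is immediate since reversing all edges preserves strong connectivity).
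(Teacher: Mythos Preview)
Your proof is correct and follows essentially the same approach as the paper. The forward direction is identical; for the backward direction the paper simply says ``choose a spanning arborescence of the union of all paths from nodes $i \in [n]$ to $r$,'' which is the same idea as your reverse-BFS construction, only stated more tersely.
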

\begin{proof}
If the directed graph has an arborescence $A_r$ rooted at each node $r \in [n]$, you can always get from any node $i$ to any node $j$ by following the path from $i$ to $j$ in $A_r$. 

If the directed graph is strongly connected, to get an arborescence rooted at $r$, choose a spanning arborescence of the union of all paths from nodes $i \in [n]$ to $r$.
\end{proof}

The second lemma gives an alternate requirement for a directed graph to be strongly connected in terms of circulations of the graph.

\begin{lemma}\label{lem:pos2}
Assume a directed graph $G$ has the property that every edge belongs to some circulation only using edges of $G$ and that the edges of $G$ form a connected graph when undirected. Then $G$ is strongly connected.
\end{lemma}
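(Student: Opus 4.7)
The plan is to argue by contradiction. Suppose $G$ is not strongly connected; then the vertex set of $G$ partitions into strongly connected components $C_1, \dots, C_k$ with $k \geq 2$, and the condensation $G^\star$ (obtained by contracting each $C_i$ to a single node and keeping only edges between distinct components) is a nontrivial DAG. The goal is to exhibit a single edge of $G$ that cannot possibly lie in any circulation, contradicting the hypothesis.

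The first step would be to locate a crossing edge. Since the underlying undirected graph of $G$ is connected and contracting vertex subsets preserves connectedness, the underlying undirected graph of $G^\star$ is connected, and hence (having $k \geq 2$ vertices) contains at least one edge. Lifting this to $G$ gives an edge $e = (a, b)$ with $a \in C_i$ and $b \in C_j$ for some $i \neq j$. Pick a topological ordering of $G^\star$; without loss of generality, $C_i$ precedes $C_j$, so there is no directed path in $G^\star$ from $C_j$ back to $C_i$, and consequently no directed path in $G$ from $b$ back to $a$.

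The second step derives the contradiction. By hypothesis, $e$ belongs to some circulation supported on the edges of $G$. Viewing this circulation as a nonnegative integer vector on edges with equal in- and out-degree at every vertex, the standard flow-decomposition result (see, e.g., \cite{schrijver2003combinatorial}) writes it as a sum of indicator vectors of simple directed cycles of $G$. In particular, $e$ lies on at least one such simple directed cycle, which provides a directed path from $b$ to $a$ using only edges of $G$. This contradicts the previous paragraph, so $G$ must in fact be strongly connected.

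I do not expect any substantial obstacle here: the only step requiring care is formalizing the flow-decomposition claim, which is entirely standard and can be invoked as a black box. The rest is a short combinatorial argument using only basic properties of the SCC condensation and the fact that contracting vertex sets preserves undirected connectedness.
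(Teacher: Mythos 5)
Your proof is correct, but it takes a different route from the paper's. You argue by contradiction through the condensation: a crossing edge between two distinct strongly connected components would have to lie on a directed cycle (by cycle-decomposing the circulation containing it), yielding a directed path back against the topological order of the condensation DAG --- impossible. The paper instead gives a direct construction: for arbitrary vertices $s,t$ it takes the \emph{undirected} path from $s$ to $t$ guaranteed by connectedness, follows each forward-oriented edge as is, and for each backward-oriented edge $e$ walks around the directed cycle containing $e$ (extracted from the circulation through $e$) to simulate traversing $e$ in reverse. Both arguments hinge on the same key fact --- every edge of a circulation lies on a simple directed cycle --- and both use undirected connectedness in an essential way (yours to find a crossing edge of the condensation, the paper's to find the undirected $s$--$t$ path). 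Your version is shorter and leans on standard SCC machinery as a black box; the paper's version is self-contained and constructive, exhibiting the directed walk explicitly, which is in keeping with the explicit-algorithm spirit of the rest of the paper. One small polish: the clause ``without loss of generality, $C_i$ precedes $C_j$'' is not a choice --- since the condensation contains the directed edge $(C_i,C_j)$, every topological order must place $C_i$ before $C_j$ --- but this only strengthens your argument.
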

\begin{proof}
We will show there is a directed path from $s$ to $t$ for any $s, t \in [n]$. Since $G$ is connected, consider the undirected path $P$ of edges from $s$ to $t$. For each edge $e$ in the path $P$: 

\begin{enumerate}
    \item if $e$ is directed towards $t$, then follow the edge $e$.
    \item if $e$ is directed towards $s$, then look at the circulation $C$ that $e$ belongs to. $C$ can be decomposed into a set of edge-disjoint cycles, one of which contains $e$. We start at the target of $e$ and follow this cycle until we get to the source of $e$ (this has the net effect of following the reversal of $e$).
\end{enumerate}

This sequence of edges allows us to reach $t$ from $s$, as desired.
\end{proof}

Note that Lemmas \ref{lem:pos1} and \ref{lem:pos2} directly imply Lemma \ref{thm:positive_factory} in the case where demands are all zero (i.e., $\cP$ is a polytope of circulations). In particular, the condition $\cP \cap (0, 1)^n \neq \emptyset$ implies that each edge belongs to at least one circulation (since for a point $x \in \cP \cap (0, 1)^n$, each edge has a positive probability of being selected by a valid factory), so by Lemma \ref{lem:pos2} the set of variable edges is strongly connected, and therefore by Lemma \ref{lem:pos1} the set of variable edges $E$ contains an arborescence rooted at every node. But since the empty flow (where $x_e = 0$ for all $e \in E$) is a circulation in $\cP$, this implies Lemma \ref{thm:positive_factory}.

We will show a generalization of this approach can be used to prove Lemma \ref{thm:positive_factory}.

\begin{proof}[Proof of Lemma~\ref{thm:positive_factory}]
Fix any specific flow $f \in \cV(\cP)$, and let $E' = \Flip_{f}(E)$. We will show that if an edge $e \in E$ appears in some flow $f' \in \cV(\cP)$, then the edge $e' = \Flip_{f}(e)$ appears in some circulation using the edges in $E'$. From this it will follow from the two above lemmas that $E'$ contains an arborescence rooted at every node (and hence there exists a tree $T \in \cT(E)$ such that $\Flip_{f}(T)$ is an arborescence.

Let $f'$ be a flow in $\cV(\cP)$. We will construct a circulation $c$ using edges in $E'$ as follows. Start by constructing the vector $c = f' - f \in [-1, 1]^{E \cup E'}$; note that $c \in \CircBar$ (e.g. by Lemma \ref{lem:ptocirc}). Next, we will transform $c$ to be supported only on the edge set $E'$ by doing the following: for any edge $e \in E$ where $c_e = -1$, add $1$ to both $c_{e}$ and $c_{\overline{e}}$ (note that if $c_{e} = -1$, then $e \in f$, so $\overline{e}$ belongs to $E'$). This maintains the property that $c \in \CircBar$, but now $c \in [0, 2]^{E'}$. Finally, this means that $c' = c/2$ is a proper element of $\Circ$ and is supported only on the elements of $E'$. As a consequence of this, for any $e \in E'$ where $c'_{e} > 0$, there exists a circulation in $E'$ containing the directed edge $e$.

Now, consider any edge $e \in E'$. If $e \not\in f$ (and thus $e \in E$), then if we can find a flow $f'$ such that $e \in f'$, it will be the case that $c'_e > 0$ and thus that there exists a circulation in $E'$ using the edge $e$. On the other hand, if $e \in f$, then since $e$ is also in $E'$, we must also have $\overline{e} \in E$. Then, if we can find a flow $f'$ such that $\overline{e} \not\in f'$, it will be the case that $c'_e > 0$, and thus that there exists a circulation in $E'$ using the edge $e$. But note that since $\cP \cap (0, 1)^n$ is non-empty, for any edge $e \in E$ we can find both a flow using $e$ and a flow not using $e$ (since a factory for an $x \in \cP \cap (0, 1)^n$ must have a positive probability of outputting both types of flows). It follows that for each edge in $E'$, there is a circulation using edges in $E'$ containing that edge. Since $E$ and $E'$ share the same undirected edge set, $E'$ is also connected as a set of undirected edges, so by Lemma \ref{lem:pos2} $E'$ is strongly connected, and by Lemma \ref{lem:pos1}, $E'$ has an arborescence rooted at every node. For any such arborescence $A$, $\Flip_{f}^{-1}(A)$ is a directed tree using edges in $E$ satisfying the conditions of this theorem.
\end{proof}

\subsection{Relaxing the Conditions in Theorem \ref{thm:main}} \label{sec:discussion}

Theorem \ref{thm:main} has a few restrictions that avoid some subtle issues that could otherwise occur. Here we discuss them in more detail:

\paragraph{Coins have $p_e \in (0,1)$.} We assume that no coin is deterministically zero or one.  Note that while our polynomials are guaranteed to be strictly positive for $p \in (0,1)^n$ they will vanish at certain points in the boundary. This issue is unavoidable even for the simple problem of sampling a $k$-subset, which can be viewed as the problem of sampling a $k$-flow in a graph with only two nodes (source and sink) and $n$ parallel edges. For $k$-subset, \cite{niazadeh2021combinatorial} shows that there exists no Bernoulli factories based on Bernstein polynomials (and more generally no exponentially-converging factories) that extend to the boundary of the hypercube. We also note that the assumption that the coins have biases strictly in $(0,1)$ is common in the Bernoulli factory literature and is made in the original paper of \cite{keane1994bernoulli} as well all most recent papers such as \cite{mossel2005new} and \cite{nacu2005fast}.

\paragraph{Non-empty interior.} For simplicity we assume that $\cP \cap (0, 1)^n \neq \emptyset$, but this condition can be easily relaxed. If $\cP$ is contained in a face of $[0, 1]^n$ and some of the variables $x_{e}$ are identically $0$ or $1$ in all of $\cP$. In this case, instead of constructing a Bernoulli factory for $\cP$, we can construct a Bernoulli factory for a lower-dimensional flow-based polytope $\cP'$ formed by eliminating these variables and adjusting demands.

\paragraph{Disconnected edge set.} If the edge set $E$ is disconnected there are no directed trees that only use the edges in $E$, so the polynomials in equation \eqref{eq:factory} must always evaluate to zero. But in this case, we can decompose $E$ into its connected components, construct a factory for the flow-based polytope induced by each of these connected components, and sample from all these subfactories to generate one sample for the original flow-based polytope. 

\paragraph{Efficiently sampling a directed tree.} Actually executing Algorithm \ref{alg:bernoulli_flow} requires one to (for some fixed flow $f$) uniformly sample a directed tree $T$ with edges in $E$ such that $\Flip_{f}(T) \in \Arb_1$. We describe here why this is possible to do in polynomial time. In brief, this follows from the fact that it is possible to count such trees via the Matrix-Tree Theorem. In more detail, note that sampling such a tree $T$ is equivalent to sampling an arborescence rooted at $1$ in the multigraph $E' = \Flip_{f}^{-1}(E)$. But it is possible to count the number of arborescences rooted at a specific node by the Matrix-Tree Theorem (Theorem \ref{thm:tutte}); moreover, this problem is self-reducible (i.e., we can count the number of arborescences that use a specific subset of edges by contracting these edges), and therefore it is also possible to uniformly sample such an arborescence. Alternatively, one can directly use a procedure such as that in \cite{anari2021sampling}.

\bibliographystyle{plainnat}
\bibliography{bernoulli}

\appendix

\section{Proof of Lemma \ref{lem:bernstein}}\label{appendix:bernstein}

Lemma \ref{lem:bernstein} is proved in Section 2 of \cite{niazadeh2021combinatorial}, but we provide a full proof here for completeness. We first two simple auxiliary lemmas.

\begin{lemma}\label{lem:sampling_monomials}
Let $M(x) = \prod_i x_i^{a_i} (1-x_i)^{b_i}$ be a Bernstein monomial with non-negative integer coefficients $a_i, b_i$. Then given access to coins with biases $x_1, \hdots, x_n$ we can sample from a coin with bias $M(x)$.
\end{lemma}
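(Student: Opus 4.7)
The plan is to construct a simple sequential procedure that flips each coin a prescribed number of times and outputs $1$ exactly when every flip produced the desired outcome. Concretely, for each $i \in [n]$, flip the $x_i$-coin $a_i$ times and record whether all $a_i$ outcomes are $1$; then flip the $x_i$-coin an additional $b_i$ times and record whether all these outcomes are $0$. Output $1$ if every check across all $i$ succeeds, and $0$ otherwise. (Equivalently, one can short-circuit and output $0$ as soon as any check fails, which only saves time and does not change the distribution.)

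Correctness follows from independence of the coin flips. For a fixed $i$, the probability that the specified $a_i + b_i$ flips of the $x_i$-coin yield $a_i$ ones followed by $b_i$ zeros is exactly $x_i^{a_i}(1-x_i)^{b_i}$, since each flip is an independent Bernoulli with bias $x_i$. Because the flips for different indices $i$ use different coins and are independent, the probability that every check succeeds is the product
\[
\prod_{i=1}^{n} x_i^{a_i}(1-x_i)^{b_i} = M(x),
\]
as required.

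Finally, this procedure fits the formal Bernoulli factory model in Definition \ref{def:factory}: the decision tree is a finite binary tree of depth $\sum_i (a_i + b_i)$ whose internal nodes are labeled by the appropriate $x_i$, with a unique root-to-leaf path labeled by the ``all successful'' pattern leading to a leaf labeled $1$, and all other leaves labeled $0$. There is no real obstacle here; the only thing to be careful about is writing down the procedure so that the labels on the path and the independence argument cleanly yield the product $M(x)$.
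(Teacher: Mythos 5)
Your proposal is correct and is essentially identical to the paper's own proof: both flip the $x_i$-coin $a_i+b_i$ times, accept only the pattern of $a_i$ ones followed by $b_i$ zeros, and use independence to get the product $M(x)$. No further comment is needed.
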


\begin{proof}
Flip the $x_i$-coin $a_i+b_i$ times. If for every coin $i$ the first $a_i$ coin flips are $1$ and the remaining coin flips are $0$, then we output $1$. Otherwise we output $0$. The probability we output $1$ is $\prod_i x_i^{a_i} (1-x_i)^{b_i} = M(x)$.
\end{proof}

\begin{lemma}\label{lem:sampling_polynomials}
Let $P(x) = \sum_i c_i M_i(x)$ be a Bernstein polynomial where $c_i \geq 0$ are real coefficients such that $\sum_i c_i \leq 1$ and $M_i(x)$ are Bernstein monomials as in Lemma \ref{lem:sampling_monomials}. Then given access to coins with biases $x_1, \hdots, x_n$ we can sample from a coin with bias $P(x)$.
\end{lemma}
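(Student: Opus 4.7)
The plan is to reduce sampling from a coin with bias $P(x)$ to sampling from coins with biases $M_i(x)$ by a two-stage mix-then-sample procedure: first draw an index selecting which monomial to use, then invoke Lemma \ref{lem:sampling_monomials} on that monomial. The constraint $c_i \geq 0$ together with $\sum_i c_i \leq 1$ guarantees that the $c_i$'s, augmented by a ``failure'' mass $c_0 := 1 - \sum_i c_i$, form a valid finite probability distribution, which we can simulate using only constant-biased coins; such coins are explicitly permitted as internal-node labels in Definition \ref{def:factory}.

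More concretely, write $P(x) = \sum_{i=1}^{k} c_i M_i(x)$. The procedure will first draw an index $I \in \{0, 1, \ldots, k\}$ with $\Pr[I = i] = c_i$ for $i \geq 1$ and $\Pr[I = 0] = c_0$. This can be carried out sequentially using only constant-biased coins: for $i = 1, 2, \ldots, k$ in turn, flip a coin with known bias $c_i / (1 - \sum_{j < i} c_j)$, set $I = i$ the first time ``heads'' is observed, and set $I = 0$ if all $k$ flips come up tails. The denominators are well-defined and lie in $(0, 1]$ unless some partial sum of $c_i$ already equals $1$, in which case the remaining $c_i$ vanish and we skip those steps.

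Once $I$ is drawn, if $I = 0$ we output $0$; otherwise we invoke the routine from Lemma \ref{lem:sampling_monomials} to simulate a coin with bias $M_I(x)$ and output its result. By the law of total probability, the final output equals $1$ with probability
\[
\sum_{i=1}^{k} c_i \cdot M_i(x) + c_0 \cdot 0 \;=\; P(x),
\]
which is exactly the desired bias.

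I do not anticipate a real obstacle here; this is a routine mixing argument built on top of Lemma \ref{lem:sampling_monomials} and standard constant-bias coin primitives. The only very mild subtlety is handling the edge case in which a partial sum of the $c_i$ reaches $1$ before all indices are exhausted, which is dispatched by the skipping rule above. Almost-sure termination follows from the fact that each step is either a single constant-biased coin flip or a finite procedure (Lemma \ref{lem:sampling_monomials}), so the whole factory terminates in finite time with probability $1$.
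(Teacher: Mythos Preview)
Your proof is correct and follows essentially the same approach as the paper: add a ``failure'' index with mass $1-\sum_i c_i$, sample an index according to these weights, and on a non-failure index invoke Lemma~\ref{lem:sampling_monomials}. The only difference is cosmetic---you spell out the sequential constant-bias coin implementation of the index draw, whereas the paper simply appeals to external randomness.
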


\begin{proof}
If the sum of coefficients is less than $1$, pretend there exists an extra index $\emptyset$ with weight $c_{\emptyset} = 1 - \sum_i c_i$. Now, use external randomness to sample an index with probabilty $c_i$. If we sample $\emptyset$, return 0. Otherwise, we use Lemma \ref{lem:sampling_monomials} to sample from a coin with bias $M_i(x)$. The probability we sample $1$ is $\sum_i c_i M_i(x) = P(x)$.
\end{proof}

\begin{proof}[Proof of Lemma \ref{lem:bernstein}]
We apply the technique known as the Bernoulli race of \cite{dughmi2017bernoulli} in conjunction with Lemma \ref{lem:sampling_polynomials}. Before we describe it, observe that we can re-scale all polynomials such that the sum of coefficients in each of them is at most $1$ since the ratio $P_v(x) / \sum_{v'} P_{v'}(x)$ doesn't change by re-scaling.

We sample an index $v \in V$ uniformly at random using external randomness. The we use the procedure in Lemma \ref{lem:sampling_polynomials}. Then flip a coin of bias $P_v(x)$. If the coin is $1$, we output $v$. Otherwise we re-try with another $v$ chosen uniformly at random. It is simple to see that we choose each $v$ with probability proportional to $P_v(x) / \abs{V}$. Since this procedure outputs almost surely the probability we choose each index is $P_v(x) / \sum_{v'} P_{v'}(x)$ as desired.
\end{proof}

\end{document}